\newtheorem{definition}{\noindent \noindent {\bf
Definition}}[section]
\newtheorem{lem}{{\bf Lemma}}[section]
\newtheorem{prop}{{\bf Proposition}}[section]
\newtheorem{remark}{{\bf Remark}}[section]
\def\r{\ensuremath{\mathbb{R}}}
\def\bcr{\begin{color}{red}}
\def\bcb{\begin{color}{blue}}
\def\bcg{\begin{color}{green}}
\def\enc{\end{color}}
 \def\ba{\begin{array}}
  \def\ea{\end{array}}
\def\rb {\mathcal{R}}
\def\R{\mathbb{R}}
\def\Li{\mathscr{L}}
\def\d{{\rm d}}
\def\derpar#1#2{\ds\frac{\partial{#1}}{\partial{#2}}}
\def\bea{\begin{eqnarray}}
\def\eea{\end{eqnarray}}
\def\beq{\begin{equation}}
\def\eeq{\end{equation}}
\def\beann{\begin{eqnarray*}}
\def\eeann{\end{eqnarray*}}
\newcommand{\ds}{\displaystyle}
\renewcommand{\neq}{=\hspace{-3.5mm}/\hspace{2mm}}
\def\bcr{\begin{color}{red}}
\def\bcb{\begin{color}{blue}}
\def\enc{\end{color}}
\def\fpd#1#2{{\frac{\partial #1}{\partial #2}}}
\def\vectorfields#1{{\mathfrak X}(#1)}
\def\cinfty#1{C^{\scriptscriptstyle\infty}(#1)}
\def\g{\mathfrak{g}}
 \definecolor{ochre}{rgb}{0.8, 0.47, 0.13}
\title{\sf Symmetry reduction  and reconstruction in  contact geometry and Lagrange-Poincar\'e-Herglotz equations}
\author{\sffamily 
Alexandre Anahory Simoes$^1$,\thanks{alexandre.anahory@ie.edu\quad ORCID: 0000-0003-4644-876X}
Leonardo Colombo$^2$,\thanks{leonardo.colombo@csic.es\quad ORCID:0000-0001-6493-6113 }
\\[0.1ex]
\sffamily 
Manuel de Le\'on$^{3,4}$, \thanks{mdeleon@icmat.es\qquad\qquad ORCID: 0000-0002-8028-2348}\, 
Modesto Salgado$^{5,6}$\thanks{modesto.salgado@usc.es\quad ORCID: 0000-0003-3982-1845}\,  
and 
Silvia Souto$^{5}$\thanks{silviasouto.perez@usc.es\quad ORCID: 0000-0003-0755-1211}
\\[1ex]
\\[0.1ex]
\normalsize\itshape\sffamily 
$^1$ IE School of Science and Technology, Madrid, Spain
\\[0.1ex]
\normalsize\itshape\sffamily 
$^2$Centre for Automation and Robotics (CSIC-UPM), Arganda del Rey, Spain
\\[0.1ex]
\normalsize\itshape\sffamily 
$^3$Instituto de Ciencias Matemáticas (CSIC), Madrid, Spain
\\[0.1ex]
\normalsize\itshape\sffamily 
$^4$Real Academia de Ciencias, Madrid, Spain
\\[0.1ex]
\normalsize\itshape\sffamily 
$^5$Departamento de Matemáticas, Facultade de Matemáticas,
\\[0.1ex]
\normalsize\itshape\sffamily 
Universidade de Santiago de Compostela, Spain
\\[0.1ex]
\normalsize\itshape\sffamily
$^6$Centro de Investigación y Tecnología Matemática de Galicia (CITMAga), Spain
}
\begin{document}

\maketitle


\date{}

\begin{abstract}
In this paper, we investigate the reduction process of a contact Lagrangian system whose Lagrangian is invariant under a group of symmetries. We give explicit coordinate expressions of the resulting reduced differential equations, the so-called Lagrange-Poincar\'e-Herglotz equations. Our framework relied on the associated Herglotz vector field and its projected vector field, and the use of well-chosen quasi-velocities. Some examples are also discussed.



  \end{abstract}

{\bf Keywords:} {Lagrange-Poincar\'e-Herglotz equations, symmetry, reduction, reconstruction, connections on principal bundles.}

 {\bf MSC:} 	37J55, 53D10, 37C79, 37J37, 70H03, 70H05, 70H20






\section{Introduction}\label{section 1}

 Lagrangian and Hamiltonian contact systems have been the subject of intense activity in recent years. The fundamental difference with traditional mechanics is that, instead of possessing conservative properties, contact systems exhibit dissipation. The geometrical reasons for this different behaviour is that one case uses symplectic geometry (a closed, non-degenerate 2-form) while the latter case involves contact forms (a non-degenerate 1-form that is not closed). From the point of view of variations, symplectic mechanics is based on Hamilton's principle, while contact mechanics is based on Herglotz's principle, a generalisation that requires solving an implicit differential equation to define the action \cite{herglotz,bravetti,MdLMLV,canarios}. These Lagrangian functions are commonly called in Physics as action-dependent Lagrangians.

One of the most relevant applications of differential geometry to the study of mechanical systems are the procedures for the reduction of the dynamics when the system exhibits symmetries. The mechanisms are: the symplectic reduction theorem of Marsden and Weinstein via the momentum application (an extension of the renowned Noether theorem \cite{mw,AM}) and its generalisation to the contact case \cite{Willett,Albert,MdLMLV}; the coisotropic reduction which in the symplectic case was developed by Weinstein \cite{Weinstein,AM,LR}, and in the contact case by de Le\'on and Lainz \cite{MdLMLV} (see also the recent survey \cite{ruben}); and the direct reduction by the Lie group of symmetries, which in the symplectic case has been developed by Mestdag and Crampin \cite{MC,RRA}. An alternative variational approach is the so-called Lagrangian reduction by stages developed by Cendra, Marsden and Ratiu \cite{Cendra}.

 In a recent work \cite{A1} we have obtained the Euler–Poincaré–Herglotz equations and Lagrange–Poincaré–Herglotz equations, respectively (correspondingly Lie–Poisson–Jacobi equations and Hamilton–Poincaré–Herglotz equations in the Hamiltonian side) when the contact Lagrangian system is considered on a Lie algebroid, and, in particular a Atiyah algebroid if we are in presence of symmetries. Similar results are obtained in \cite{A2}, but in this case we are taking the methodology of prolongations of Lie algebroids, which permits to treat directly with contact structures and not general Jacobi structures as above. In \cite{EPH}, the authors discussed reduction procedures for contact mechanical systems on Lie groups based on variational principles.

In this paper, we develop a procedure for the reduction and reconstruction of the dynamics of Lagrangian contact systems using the quasi-coordinate method, extending the results of \cite{MC} for conservative systems to systems that include dissipation. One of the necessary geometrical constructions is the prolongation of a connection in a principal fiber bundle using the Hessian of the Lagrangian (instead of a Riemannian metric as in the case of Marsden \cite{book}).
The corresponding reduced dynamics are governed by the so-called Lagrange-Poincar\'e-Herglotz equations.

The paper is structured as follows. Section 2 is devoted to introduce some concepts on the geometry of the fiber bundle $TQ \times \mathbb{R}$, including Second Order Differential Equations (SODE for short), local frames and quasi-velocities. The main notions on contact Lagrangian systems are developed in Section 3, where the dynamics is provided by the so-called Herglotz SODE. In Section 4 we consider configuration manifolds that are principal bundles over a quotient $Q \to Q/G$, where $G$ is a Lie group acting on $Q$; we also discuss connections and $G$-invariant vector fields, in order to get $G$-invariant local frames. Sections 5, 6 and 7 contain the main results of this paper. Indeed, in Section 5 we give a reduction procedure of the original dynamics to the space $TQ/G \times \mathbb{R}$, and so we obtain the so-called Lagrange-Poincar\'e-Herglotz equations. Section 6 is an auxiliary section for Section 7; in fact, in order to get a reconstruction procedure we need to construct a connection in the principal bundle
$TQ \times \mathbb{R} \to TQ/G \times \mathbb{R}$ using the nondegeneracy of the Lagrangian. We also include one example to illustrate our results as well as a final section of conclusions and further work.

\section{Preliminaires}
\subsection{Connections and curvature}

In what follows we will often use non-linear connections on different many fiber bundles. In order to unify notations, let us recall briefly some definitions. Let $p: E \to B$ be a fiber bundle, that is, $p$ is a surjective submersion. For $e\in E$, the vertical space $V_eE$ at $e$ is the kernel of $T_ep: T_eE \to T_{p(e)}B$. So, we obtain the so-called \textbf{vertical distribution} $VE=\{V_eE  | e\in E\}$. Next, we construct a short exact sequence of vector bundles over $E$ as follow,
\begin{equation} \label{seq}
0 \to VE \to TE \to E\times_B TB \to 0,
\end{equation}
where the middle arrow $j: TE \to E\times_B TB$ is given by $v_e \mapsto (e, Tp(v_e))$, and $E\times_B TB$ denotes the fiber product of $E$ and $TB$ over $B$.

A \textbf{connection} on $p$ is either given by a {right splitting $\gamma: E\times_B TB \to TE$ }(i.e.\ a linear  map satisfying $j \circ \gamma = id_{E\times_B TB }$), or by the corresponding left splitting $\omega = id_{TE} -\gamma    \circ j: TE \to VE\subset TE$.

The above short exact sequence naturally extends to the level of sections of the corresponding bundles over $E$,
\[
0 \to Sec(VE) \to \vectorfields{E} \to Sec(E\times_B TB) \to 0.
\]
The notation $\vectorfields{E}$ stands for the Lie algebra of vector fields on $E$.
A splitting of (\ref{seq}) induces a splitting of the second sequence, and conversely. When we interpret $\omega: \vectorfields{E} \to \vectorfields{E}$ as a (1,1)-tensor field on $E$, we will call it the \textbf{connection form}, or the \textbf{vertical projection}. The map $h:=id-\omega$ is the \textbf{horizontal projection} of the connection. Since vector fields $T$ on $B$ can be thought as basic sections in $Sec(E\times_B TB)$, we may define the \textbf{horizontal lift} of the vector field $T$ as the vector field $T^h$ of $E$, given by $T^h(e) = \gamma(e,T(\pi(e)))$, for each $e\in E$.

The \textbf{curvature} of the connection is the (1,2)-tensor field on $E$, given by $(X,Y) \mapsto -\omega([hX,hY])$, for two vector fields $X,Y\in\vectorfields{E}$. In what follows, however, we will also often use the word `curvature' for the restriction of that map to two horizontal lifts and use the notation
\[
K(T,S) = - \omega([T^h,S^h])  \in\vectorfields{E}
\]
when  $T,S\in\vectorfields{B}$.

Assume now that $p: E \to B$ is a principal bundle. This means that there exists a Lie group $G$ acting on $E$ on the left such that $B$ is naturally identified with the quotient manifold $E/G$. In addition, the fiber bundle is locally trivial, in the sense that for any point in $B$ there exists a neighborhoopd $U$ such that
$\pi^{-1} (U)$ is diffeomorphic to the product $U \times G$ and this diffeomorphism preserves the action of $G$ (the action of $G$ on $U \times G$
is the trivial one). We know that for any element $\xi$ of the Lie algebra $\mathfrak{g}$ of $G$ there exists a fundamental vector field $\xi_E$ on $E$ provided by the exponential of the group.

Consider now the left splitting $\omega_e: T_eE \to V_eE$ as in (\ref{seq}), for any $e \in E$. Since
$$
\omega_e(\xi_E)(e) = \xi_E(e)
$$
we can define a mapping 
$$
\overline{\omega} : T_eE \to \mathfrak{g}
$$
by
$$
\overline{\omega}_e (\xi_E(e)) = \xi
$$
This mapping is called the connection form, it is a 1-form taking values in $\mathfrak{g}$ and if it has the property that
$$
\Phi_g^* \overline{\omega} = Ad(g) \overline{\omega}
$$
for any $g \in G$, where $\Phi_{g}$ is the left action of the Lie group element $g$ and $Ad$ is the adjoint representation of the Lie group $G$, then it is called a principal connection.

\subsection{Geometric structures on \(TQ\times\r\)}\label{sec2}

%

Let $Q$ be a manifold with dimension $n$ with local coordinates $(q^\alpha)$.  Let $\tau_Q: TQ \to Q$ be the tangent bundle of a smooth manifold $Q$. If $(q^\alpha)$ are local coordinates on $U \subseteq Q$, the natural coordinates $(q^\alpha, u^\alpha)$ on $T U=\tau_Q^{-1}(U)$ are
$$
 q^\alpha(v_q)=q^\alpha(q),\qquad
  u^\alpha(v_q)=v_q(q^\alpha) \,.
  $$
 
 \paragraph{Lifts of functions.}
 
If $f$ identificación a differentiable function on $Q$, the \textit{vertical lift} $f^V$ and the \textit{complete lift} $f^C$ of $f$, are the functions on 
$ TQ $ given by
$$f^V(v_q)=(\tau_Q)^*f(v_q)=f(q)
\,, \qquad f^{C}(v_q)=v_{q}(f)\equiv u^\alpha(v_q)\ds\derpar{f}{q^\alpha}\Big\vert_q\,  .$$

Since $(q^\alpha)^V=q^\alpha$ and $(q^\alpha)^{C}=u^\alpha$, we deduce that vector fields on $ TQ $ are characterized by its action on vertical and complete lifts of functions.

\paragraph{Lifts of vector fields.} Given a vector field $X\in  \mathfrak{X}(Q)$, the \textit{vertical lift} $X^{V}$, and the \textit{complete lift} $X^C$ are the vector fields on $T Q$ given by 
\begin{equation}\label{VClifts}
\begin{array}{lcl}  
X^{V}(f^V)=0, & &X^{V}(f^C)=  (X(f))^V ,
\\ \noalign{\medskip}
X^C(f^V)=(X(f))^V, & & X^C(f^C)=(X(f))^{C}, 
 \end{array}
\end{equation}
for any function $f$ on $Q$.

Taking adapted coordinates $(q^\alpha, u^\alpha)$ on $T Q$, if $X=X^\alpha \ds\derpar {}{q^\alpha}$, then
\begin{equation}\label{lifts-vectors1} 
X^{V} =X^\alpha \derpar{}{u^\alpha} \,,
\qquad 
X^C=X^\alpha \derpar{}{q^\alpha}+ u^\beta\derpar{X^\alpha}{q^\beta}\derpar{}{u^\alpha} \,,
 \end{equation} 
and we have
\begin{equation}\label{lifts-vectors2} 
\left(\derpar{}{q^\alpha}\right)^{V} =   \derpar{}{u^\alpha} \,,
\qquad 
\left(\derpar{}{q^\alpha}\right)^C=  \derpar{}{q^\alpha}\,  .\end{equation}
Moreover, we have
\begin{equation}\label{lifts-vectors13}
 [X^C,Y^C]=[X,Y]^C,\quad [X^C,Y^V]=[X,Y]^V\, .
\end{equation}
and
\begin{equation}\label{lifts-prod-funct-vector}
 (f\, X)^V=f^V\,X^V, \quad ,\quad (f\, X)^C=f^V\,X^C+f^C\, X^V\, .
\end{equation}
 
The \textbf{Liouville vector field} $\Delta\in \mathfrak{X}(
T Q)$ is the 
vector field defined by $\Delta(f^V)=0$ and $\Delta (f^C)=f^C  .$
In adapted coordinates, it reads
\begin{equation}\label{delta1} \Delta = u^\alpha \derpar{}{u^\alpha} .\end{equation}

\begin{remark}\
{\rm 
\begin{enumerate} 
\item If $X$ is a vector field on $Q$ with local one-parameter group $\phi_t: Q \to Q$, then $X^C$ is the infinitesimal generator of the flow $T\phi_t : TQ \to TQ$.

\item $\Delta$ is the infinitesimal generator of the flow 
$\phi : \R \times TQ  \to TQ  \,,\, \phi(t,v_q) = e^t v_q,   $.
\end{enumerate}
}\end{remark}

As a consequence of \eqref{lifts-vectors1} and  \eqref{lifts-vectors2} we deduce that  the   tensor fields of type $(1,1)$ on $ TQ $ are characterized by the action on these lifts of vector fields.

The \textbf{canonical tangent structure} of $TQ$ is the   tensor field $S$ 
of type $(1,1)$  defined by
$S(X^C)=X^V$ and $S(X^V)= 0 \, .$
In local adapted coordinates it is written as
\begin{equation}\label{localJ}
S=\displaystyle\frac {\displaystyle\partial}{\displaystyle\partial u^\alpha} \otimes \d q^\alpha \,
.
\end{equation}

\begin{remark} \label{ext}{\rm Since $TQ \times\R \to TQ $ is a trivial vector bundle, 
the lifts of functions, vector fields, and  the canonical structures on $TQ $ (the 
canonical tangent structure and the
Liouville vector field described above)
can be extended to $TQ \times\R $ in a natural way,
and are denoted with the same notations ($f^V$ ,$f^C$, $X^V$, $X^C$, $S$ and $\Delta$).}
\end{remark}

 
%
%

\subsection{Second-order differential equations}\label{kvf}

\begin{definition}\label{dfn:k-contact-holonomic-section}
    Consider a map $c_s\colon I \subset\r \to Q\times\r$ with
    $ c_s(t)= (c(t),s(t))\,, $
    where $c\colon I \subset\r  \to Q$. The \textbf{first prolongation} of $c_s$ to $T Q \times\r$ is the curve $c_s^{(1)}\colon I \subset \r\to T Q \times\r$ given by
    $$ c_s^{(1)}(t)= (\dot c(t),s(t))\,. $$

In local coordinates, if   $ c_s(t)= (c(t),s(t))=(q^\alpha(t),s(t))\,, $
 then we have
\begin{equation}\label{phis1}
     c_s^{(1)}(t)= (\dot c(t),s(t))  = \left(q^\alpha(t), \dot q^\alpha(t),s(t)\right).
\end{equation}	
	
\end{definition}

\begin{definition}\label{xijso}
    A vector field $\mathbf{\Gamma}$ on $   T  Q \times\R $ is a \textbf{second-order differential equation} (or a {\sc \textbf{SODE}}) if $S(\Gamma)=\Delta$.
\end{definition}

The local expression of a SODE is
\begin{equation}
	\label{localsode1} \Gamma  = u^\alpha  \frac{\partial}
	{\partial   q^\alpha}+ f^\alpha  \frac{\partial} {\partial
		u^\alpha }+ g  \frac{\partial} {\partial
		s }, 
 \end{equation}
for some local functions $f^\alpha$, $g$  $\in \cinfty{TQ \times \r}$.

If $\sigma\colon I\subset \R  \to T  Q\times\R $, locally given by
$\sigma(t)=(q^\alpha(t),u^\alpha(t),s(t))$, is an integral curve of a {\sc sode} $\mathbf{\Gamma} $, from Definition \ref{dfn:k-contact-holonomic-section} and equation \eqref{localsode1}, it follows that
\begin{equation}\label{solsopde}
    \ds\frac {d q^\alpha}{dt}\Big\vert_t = u^\alpha(t)\,,\qquad \ds\frac  {du^\alpha}{dt}\Big\vert_t = f^\alpha(\sigma(t))\,,\qquad \ds\frac  {ds}{dt}\Big\vert_t = g^\alpha(\sigma(t))\,.
\end{equation}
Then, we have
$$
    \sigma(t) = (q^\alpha(t),\ds\frac {d q^\alpha}{dt}\Big\vert_t,s(t))        
    \quad \ds\frac  {ds}{dt}\Big\vert_t = g^\alpha(\sigma(t))\,.
$$
and hence $\sigma(t)= c_s^{(1)}(t)= (\dot c(t),s(t))$ where $c(t)=\tau \circ \sigma(t)=
(q^\alpha(t))$ is the \textbf{first prolongation} of $c_s(t)=(  c(t),s(t))$ to $T Q \times\r$.

So, in coordinates a  {\sc sode} defines a system of differential equations of the form

\begin{equation}
	\label{nn1} \frac{d^2 q^\alpha} {dt^2}=f^\alpha(q,\dot{q},s) ,\quad  \frac{ds} {dt}=g(q,\dot{q},s).
\end{equation}

\subsection{Local frames and quasi-velocities}

Consider $\{Z_{\alpha}\}$ a \textbf{local frame} on $Q$, that is, a local basis of vector fields on $Q$. Then each vector field $Z \in \mathfrak{X}(Q)$ can be written as $$Z = Z^{\alpha} Z_{\alpha}$$ for some local functions $Z^{\alpha} \in C^\infty(Q)$.

Likewise, each tangent vector $w_q \in T_q Q$ can be descomposed as $w_q = v^\alpha(w_q) Z_{\alpha} (q)$, for some real numbers $v^\alpha(w_q)$. 

This new fibre coordinates $v^\alpha$, which are the components of $w_q$ with respect to the basis $Z_{\alpha}(q)$, are called \textbf{quasi-velocities}.

If we consider the standard frame $\{\partial/\partial q^\beta\}$ on $Q$, then $$Z_{\alpha} (q) = Z^{\beta}_{\alpha} (q)\ds\frac{\partial}{\partial q^{\beta}}\Big\vert_q.$$
Therefore,
$$w_q = v^\alpha(w_q) Z_{\alpha} (q) = v^\alpha(w_q) Z^{\beta}_{\alpha} (q)\ds\frac{\partial}{\partial q^{\beta}}\Big\vert_q ,$$
and we have $$u^\beta( w_q)= v^\alpha(w_q) Z^\beta_\alpha(q)\,.$$
So, we can use $\left(q^\alpha,v^\alpha,s\right)$ as (non-natural) coordinates in $TQ \times \r$,
$$
v^\alpha(w_q)= (Z_\beta^\alpha(q))^{-1} u^\beta( w_q)$$

From the local expressions (\ref{lifts-vectors1}), we can easily conclude that:
\begin{prop}\label{ReflocTQR}
If $\{Z_{\alpha}\}$ is any local frame on $Q$, then $\{Z_{\alpha}^C,Z_{\alpha}^V, \frac{\partial}{\partial s}\}$ is a local frame on $TQ \times \r$.
\end{prop}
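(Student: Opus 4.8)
The plan is to show that the $2n+1$ vector fields $\{Z_\alpha^C, Z_\alpha^V, \partial/\partial s\}$ are pointwise linearly independent on $TQ\times\r$, since this number matches the dimension $\dim(TQ\times\r)=2n+1$, and a collection of $\dim$-many pointwise independent vector fields automatically spans the tangent space at each point and hence forms a local frame. First I would fix the local coordinate frame $\{\partial/\partial q^\alpha\}$ on $Q$ and write $Z_\alpha = Z_\alpha^\beta\,\partial/\partial q^\beta$, where by hypothesis the matrix $(Z_\alpha^\beta(q))$ is invertible at each point of the domain (this is exactly what it means for $\{Z_\alpha\}$ to be a local frame on $Q$).

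Next I would compute the coordinate expressions of the complete and vertical lifts using \eqref{lifts-vectors1}. For the vertical lifts one gets directly
\[
Z_\alpha^V = Z_\alpha^\beta\,\derpar{}{u^\beta},
\]
and for the complete lifts
\[
Z_\alpha^C = Z_\alpha^\beta\,\derpar{}{q^\beta} + u^\gamma\,\derpar{Z_\alpha^\beta}{q^\gamma}\,\derpar{}{u^\beta}.
\]
The key structural observation is that, with respect to the ordered natural basis $\bigl(\partial/\partial q^\beta,\ \partial/\partial u^\beta,\ \partial/\partial s\bigr)$, the $(2n+1)\times(2n+1)$ matrix whose rows are the components of $Z_1^C,\dots,Z_n^C,Z_1^V,\dots,Z_n^V,\partial/\partial s$ has a block structure: the $\partial/\partial s$-column is zero for all the lifts and has a single $1$ in the last row, so the last coordinate decouples. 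The remaining $2n\times 2n$ block is lower-block-triangular, namely
\[
\begin{pmatrix} Z & * \\ 0 & Z \end{pmatrix},
\]
where $Z=(Z_\alpha^\beta)$ appears twice on the diagonal (once from the $\partial/\partial q^\beta$-components of the $Z_\alpha^C$ and once from the $\partial/\partial u^\beta$-components of the $Z_\alpha^V$) and the off-diagonal block $*$ collects the terms $u^\gamma\,\partial Z_\alpha^\beta/\partial q^\gamma$.

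I would then conclude by taking determinants: the block-triangular form gives a total determinant equal to $1\cdot\det(Z)^2 = \det(Z_\alpha^\beta)^2$, which is nonzero precisely because $\{Z_\alpha\}$ is a frame on $Q$. Hence the $2n+1$ vector fields are linearly independent at every point of the domain, and therefore constitute a local frame on $TQ\times\r$. I do not expect any genuine obstacle here; the only point requiring mild care is organizing the change-of-basis matrix so that the block-triangular structure is transparent, which is what makes the nonvanishing of the determinant immediate rather than requiring a brute-force cofactor expansion.
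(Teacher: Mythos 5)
Your proposal is correct and follows the same route the paper intends: the paper offers no written proof beyond the remark that the claim follows "from the local expressions (\ref{lifts-vectors1})", and your block-triangular determinant computation $\det(Z_\alpha^\beta)^2\neq 0$ is exactly the verification being alluded to. No issues.
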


\begin{lem}
We may express (locally) a second-order differential equation (SODE) $\Gamma \in \mathfrak{X}\left(TQ \times \r\right)$ in the form
\begin{equation}\label{sodelf}
\Gamma = v^\alpha Z^C_\alpha + \Gamma^\alpha Z^V_\alpha + g \ds\frac{\partial}{\partial s} ,
\end{equation}
where $v^\alpha$ are the quasi-velocities, and $\Gamma^\alpha$, $g$ are some local functions on $ TQ \times \r $.
\end{lem}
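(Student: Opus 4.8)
The plan is to combine Proposition \ref{ReflocTQR} with the defining property $S(\Gamma)=\Delta$ of a SODE. Since $\{Z_\alpha^C, Z_\alpha^V, \ds\frac{\partial}{\partial s}\}$ is a local frame on $TQ\times\r$, I can expand $\Gamma$ as
\[
\Gamma = a^\alpha Z_\alpha^C + b^\alpha Z_\alpha^V + g\,\frac{\partial}{\partial s}
\]
for some local functions $a^\alpha, b^\alpha, g$ on $TQ\times\r$. The whole task then reduces to identifying the coefficients $a^\alpha$ with the quasi-velocities $v^\alpha$; the remaining coefficients get relabelled $\Gamma^\alpha := b^\alpha$, and $g$ is already the coefficient of $\ds\frac{\partial}{\partial s}$.

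Next I would apply the canonical tangent structure $S$ to this expansion. Using the characterizing identities $S(Z_\alpha^C)=Z_\alpha^V$ and $S(Z_\alpha^V)=0$, together with $S(\ds\frac{\partial}{\partial s})=0$ (which follows from the local expression \eqref{localJ}, extended to $TQ\times\r$ as in Remark \ref{ext}, since $\d q^\alpha$ annihilates $\ds\frac{\partial}{\partial s}$), I obtain $S(\Gamma) = a^\alpha Z_\alpha^V$. The key computational step is then to express the Liouville vector field $\Delta$ in the same frame. Writing $Z_\alpha = Z_\alpha^\beta \ds\frac{\partial}{\partial q^\beta}$, the local formula \eqref{lifts-vectors1} gives $Z_\alpha^V = Z_\alpha^\beta \ds\frac{\partial}{\partial u^\beta}$, so that $v^\alpha Z_\alpha^V = v^\alpha Z_\alpha^\beta \ds\frac{\partial}{\partial u^\beta}$. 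Invoking the relation $u^\beta = v^\alpha Z_\alpha^\beta$ between natural fibre coordinates and quasi-velocities, the right-hand side equals $u^\beta \ds\frac{\partial}{\partial u^\beta}=\Delta$ by \eqref{delta1}. Hence $\Delta = v^\alpha Z_\alpha^V$.

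Finally, imposing the SODE condition $S(\Gamma)=\Delta$ yields $a^\alpha Z_\alpha^V = v^\alpha Z_\alpha^V$, and since the vertical lifts $\{Z_\alpha^V\}$ form part of a local frame and are therefore pointwise linearly independent, I conclude $a^\alpha = v^\alpha$. Substituting back produces the claimed expression \eqref{sodelf}. This argument is essentially a change-of-frame computation, so I do not anticipate a genuine obstacle; the only point deserving care is the identification $\Delta = v^\alpha Z_\alpha^V$, that is, verifying that the quasi-velocities are precisely the components of the Liouville field along the vertical lifts of the chosen frame, which is what forces the coefficients of $Z_\alpha^C$ to be the $v^\alpha$ rather than arbitrary functions.
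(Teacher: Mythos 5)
Your proof is correct. It differs slightly in mechanism from the paper's: the paper identifies the coefficients of the $Z_\alpha^C$ by pushing the frame expansion forward with the tangent projection, using $T\tau_Q\circ Z_\alpha^C = Z_\alpha\circ\tau_Q$ and $T\tau_Q\circ Z_\alpha^V=0$, so that the SODE condition (read as ``$\Gamma$ projects onto the foot point $v_q$'') forces $a^\alpha Z_\alpha(q)=v_q=v^\alpha(v_q)Z_\alpha(q)$. You instead work directly from the definition $S(\Gamma)=\Delta$, apply the vertical endomorphism to the frame expansion, and establish the identity $\Delta=v^\alpha Z_\alpha^V$ in quasi-velocities. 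The two routes are equivalent (the projection property and $S(\Gamma)=\Delta$ are two formulations of the same condition), and both reduce to the same linear-independence argument at the end; yours has the minor merit of staying entirely within the paper's stated Definition \ref{xijso} and of making explicit the useful identity $\Delta=v^\alpha Z_\alpha^V$, whereas the paper's version avoids computing $\Delta$ in the non-natural frame at the cost of invoking the projection characterization of a SODE, which it never formally states.
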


The proof is a consequence of the above Proposition \ref{ReflocTQR} and the following properties: $T \tau_Q \circ Z_\alpha^C = Z_\alpha \circ \tau_Q$ and $T \tau_Q \circ Z_\alpha^V = 0$, where 
$  \tau_Q: TQ \times \r \to Q $ is the canonical projection.

\section{Contact Lagrangian systems }

In this section, we recall the Lagrangian contact formalism and we introduce the notion of $G$-invariant vector fields associated with a connection. For a regular Lagrangian, the solution of the Herglotz equations is given by the integral curves of some vector field, the so-called \textbf{Herglotz SODE}.
 
A \textit{contact manifold} is a pair $(M,\eta)$ where $M$ has odd dimension $2n+1$ and $\eta$ is a contact $1$-form, that is, $\eta \wedge d\eta^n \neq 0$. On these manifolds, there exist a unique vector field (called \textit{Reeb vector field}) $\rb $ such that $i_\rb  \eta = 1$ and $i_\rb  d\eta = 0$.

 Consider now a Lagrangian function $L: TQ \times \r \to \r$, $L=L(q^\alpha,u^\alpha,s)$.
We will assume that $L$ is regular. That is, its Hessian matrix with respect to the velocities $(W_{\alpha\beta})$,  where
$$W_{\alpha\beta}=\frac{\partial^2 L}{\partial u^\alpha \partial u^\beta}, $$
 is regular.
 
Another way to read the regularity of the Lagrangian $L$ in terms of a non-standard frame is the  following:  {\it  a Lagrangian $L$ is regular if, and only if, the matrix
$$\left(Z_\alpha^V\left(Z_\beta^V\left(L\right)\right)\right)$$
has maximal rank.}

 We are able to define the Lagrangian $1$-form  on $TQ \times \r$
\begin{equation}\label{1forcont}
	\eta_{L}=ds - \theta_{L}= ds-dL\circ S=ds-\ds\frac{\partial L}{\partial u^\alpha} \, dq^\alpha 
\end{equation}
which is a contact form on $TQ \times \r $ if and only if $L$ is regular. Indeed,
\begin{equation}\label{detal}
	d\eta_{L}= - \ds\frac{\partial^2 L}{\partial s \partial u^\alpha} \, ds \wedge dq^\alpha - \ds\frac{\partial^2 L}{\partial q^\beta \partial u^\alpha} \, dq^\beta \wedge dq^\alpha - \ds\frac{\partial^2 L}{\partial u^\beta \partial u^\alpha} \, du^\beta\wedge dq^\alpha .
\end{equation}
and then
$$ \eta_L\wedge   d\eta_L^n = det(W_{\alpha\beta}) d^nq \wedge   d^nu \wedge ds\neq 0.$$

The triple $\left(TQ \times \r, \eta_{L},E_{L}\right)$ is said to be a \textit{contact Lagrangian system}.
 The corresponding Reeb vector field, determined by the relations
$i_{\rb _{L}} \eta_{L} = 1 \,\,,  i_{\rb _{L}} d\eta_{L} = 0 ,$ 
 is locally given by 
\begin{equation}\label{RbL}
	\rb _{L}= \ds\frac{\partial}{\partial s} - W^{\alpha \beta} \ds\frac{\partial^2 L}{\partial s \partial u^\beta} \ds\frac{\partial}{\partial u^\alpha} ,
\end{equation}

The Lagrangian energy of the system is defined as
\begin{equation}\label{enerlagcont}
		E_{L}=\Delta(L)-L=u^\alpha\, \ds\frac{\partial L}{\partial u^\alpha} - L .
	\end{equation}
	
	From a straightforward computation, we obtain
	$$\rb _{L} \left(E_{L}\right) = - \ds\frac{\partial L}{\partial s} . $$

	We denote by $\flat_L$ the vector bundle isomorphism   for the contact
form $\eta_L$ on $TQ \times \r$. That is,
	\begin{equation}\label{isobL}
	\begin{array}{lcll}
		\flat_{L}:& \mathfrak{X}\left(TQ \times \r\right) &\longrightarrow & \Omega^{1}\left(TQ \times \r\right) \\ \noalign{\medskip}
		& Z & \longmapsto &  \flat_{L}(Z)=i_{Z}d\eta_{L}+\eta_{L}(Z)\eta_{L}.
	\end{array}
\end{equation}

The dynamics of the system is given by the Herglotz vector field $\Gamma$, which  is the unique vector
field satisfying
\begin{equation}\label{Gamma0}
	\flat_{L}\left(\Gamma\right)=dE_{L}+ \left(
	 \ds\frac{\partial L}{\partial s}  - E_{L}\right) \eta_{L} .
\end{equation}
A direct computation from equation (\ref{Gamma0})
	shows that $\Gamma$ is a {\sc sode} locally given by
	\begin{equation}\label{elocSODE}
\Gamma= u^\alpha \ds\frac{\partial}{\partial q^\alpha} + \Gamma^\alpha \ds\frac{\partial}{\partial u^\alpha} + L \ds\frac{\partial}{\partial s} ,
\end{equation}
where the components $\Gamma^\alpha$ satisfy the equation
\begin{equation}\label{Bi}
	\Gamma^\beta \ds\frac{\partial^2 L}{\partial u^\beta \, \partial u^\alpha} + u^\beta \ds\frac{\partial^2 L}{\partial q^\beta \, \partial u^\alpha} + L \, \ds\frac{\partial^2 L}{\partial s \, \partial u^\alpha} - \ds\frac{\partial L}{\partial q^\alpha}= \ds\frac{\partial L}{\partial s} \ds\frac{\partial L}{\partial u^\alpha} .
\end{equation}

	 Let $\left(q^\alpha(t),u^\alpha(t),s(t)\right)$ be an holonomic integral curve of $\Gamma$, that is $u^\alpha(t)= \ds\frac{dq^\alpha}{dt}\Big\vert_{t}$, then by substituting its values in equation (\ref{Bi})  we obtain  	 \begin{equation}\label{ELG}
\ds\frac{d^2 q^\beta}{dt^2} \ds\frac{\partial^2 L}{\partial u^\beta \, \partial u^\alpha} + \ds\frac{dq^\beta}{dt} \ds\frac{\partial^2 L}{\partial q^\beta \, \partial u^\alpha} + \ds\frac{ds}{dt} \, \ds\frac{\partial^2 L}{\partial s \, \partial u^\alpha} - \ds\frac{\partial L}{\partial q^\alpha}= \ds\frac{\partial L}{\partial s} \ds\frac{\partial L}{\partial u^\alpha} , \,\, \qquad  \ds\frac{ds}{dt}  =  L .
\end{equation}
which are the \textit{Herglotz equations}    considered by Herglotz in $1930$ \cite{herglotz}, usually written as
 \begin{equation}\label{EH}
	\ds\frac{d}{dt} \left(\ds\frac{\partial L}{\partial u^\alpha}\right) - \ds\frac{\partial L}{\partial q^\alpha}  =  \ds\frac{\partial L}{\partial s} \ds\frac{\partial L}{\partial u^\alpha} \,\,, \qquad  \ds\frac{ds}{dt}  =  L .
\end{equation}

The {\sc sode} $\Gamma \in \mathfrak{X}(TQ \times \r)$, solution to (\ref{Gamma0}), is called the \textit{Herglotz vector field} associated with the Lagrangian function $L$.

\bigskip

Next, we will show an alternative form to write Herglotz equations.
\begin{prop}\label{frameprop}
Let $\Gamma \in \mathfrak{X}\left(TQ \times \r\right)$ be a SODE and $L$ a regular Lagrangian on $TQ \times \r$. The following conditions are equivalent:
\begin{enumerate}
\item $\Gamma$ is Lagrangian vector field ,
\item\label{SODEr} $\Li_{\Gamma} \eta_{L} =\derpar{L}{s}  \eta_{L}\, , \quad \eta_{L}(\Gamma) = - E_L ,$
\item For each vector field $Z \in \mathfrak{X}(Q)$,
$$\Gamma\left(Z^V\left(L\right)\right) - Z^C\left(L\right) =  \derpar{L}{s}  \left(Z^V\left(L\right)\right) ;$$
or, equivalently, for each local frame $\{Z_{\alpha}\}$ of vector fields on $Q$,
\begin{equation}\label{ecsHlocframe}
\Gamma\left(Z_{\alpha}^V\left(L\right)\right) - Z_{\alpha}^C\left(L\right) =  \derpar{L}{s}  \left(Z_{\alpha}^V\left(L\right)\right) , \quad \alpha = 1,\ldots,n .
\end{equation}
\end{enumerate}
In particular, if we take the standard frame $\{\partial/ \partial q^\alpha\}$ on $Q$, equations (\ref{ecsHlocframe}) become
$$\Gamma\left(\ds\frac{\partial L}{\partial u^\alpha}\right) - \ds\frac{\partial L}{\partial q^\alpha} = \ds\frac{\partial L}{\partial s} \ds\frac{\partial L}{\partial u^\alpha} .$$
\end{prop}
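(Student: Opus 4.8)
The plan is to establish the chain of equivalences in two stages: first $(1)\Leftrightarrow(2)$ by intrinsic contact-geometric manipulation, and then $(2)\Leftrightarrow(3)$ by evaluating the tensorial identity $\mathcal{L}_\Gamma\eta_L=\tfrac{\partial L}{\partial s}\eta_L$ against the frame $\{Z^C,Z^V,\partial/\partial s\}$ of Proposition \ref{ReflocTQR}. A preliminary observation that streamlines the second stage is that the equation in $(3)$ is $C^\infty(Q)$-linear in $Z$: using $(fZ)^V=f^V Z^V$, $(fZ)^C=f^V Z^C+f^C Z^V$ and the SODE property $\Gamma(f^V)=f^C$, one checks that replacing $Z$ by $fZ$ multiplies the whole identity by $f$ (the $f^C Z^V(L)$ terms cancel). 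Hence $(3)$ need only be verified on a local frame, and the specialisation $Z=\partial/\partial q^\alpha$ will recover the displayed coordinate form at the end.

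For $(1)\Leftrightarrow(2)$ I would start from $\flat_L(\Gamma)=i_\Gamma d\eta_L+\eta_L(\Gamma)\eta_L=dE_L+(\tfrac{\partial L}{\partial s}-E_L)\eta_L$ and contract with the Reeb field $\mathcal{R}_L$. Since $d\eta_L(\Gamma,\mathcal{R}_L)=-(i_{\mathcal{R}_L}d\eta_L)(\Gamma)=0$, $\eta_L(\mathcal{R}_L)=1$, and $\mathcal{R}_L(E_L)=-\tfrac{\partial L}{\partial s}$ (already recorded), the $d\eta_L$ contribution drops out and the contraction collapses to $\eta_L(\Gamma)=-E_L$, the second equation of $(2)$. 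Substituting this back leaves $i_\Gamma d\eta_L=dE_L+\tfrac{\partial L}{\partial s}\eta_L$, and Cartan's formula $\mathcal{L}_\Gamma\eta_L=i_\Gamma d\eta_L+d(\eta_L(\Gamma))=i_\Gamma d\eta_L-dE_L$ yields $\mathcal{L}_\Gamma\eta_L=\tfrac{\partial L}{\partial s}\eta_L$, the first equation of $(2)$. The converse is the same computation read backwards: Cartan's formula converts the two equations of $(2)$ into $i_\Gamma d\eta_L=dE_L+\tfrac{\partial L}{\partial s}\eta_L$, and adding $\eta_L(\Gamma)\eta_L=-E_L\eta_L$ reconstitutes the defining equation (\ref{Gamma0}).

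For $(2)\Rightarrow(3)$ I would evaluate $\mathcal{L}_\Gamma\eta_L=\tfrac{\partial L}{\partial s}\eta_L$ on $Z^C$. The needed values are $\eta_L(Z^C)=-Z^V(L)$ and $\eta_L(Z^V)=0$, both immediate from $\eta_L=ds-dL\circ S$ together with $S(Z^C)=Z^V$, $S(Z^V)=0$ (recall $Z^C,Z^V$ carry no $\partial/\partial s$ component, so $ds$ annihilates them). The crucial input is the SODE identity $S([\Gamma,Z^C])=0$, a one-line coordinate consequence of $S(\Gamma)=\Delta$ (the $q$-components of $[\Gamma,Z^C]$ cancel); this gives $\theta_L([\Gamma,Z^C])=dL(S[\Gamma,Z^C])=0$. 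Feeding this into $(\mathcal{L}_\Gamma\eta_L)(Z^C)=\Gamma(\eta_L(Z^C))-\eta_L([\Gamma,Z^C])$ produces $-\Gamma(Z^V(L))+Z^C(g)$, where $g=ds(\Gamma)$. Comparison with $\tfrac{\partial L}{\partial s}\eta_L(Z^C)=-\tfrac{\partial L}{\partial s}Z^V(L)$ gives $\Gamma(Z^V(L))-Z^C(g)=\tfrac{\partial L}{\partial s}Z^V(L)$; inserting $g=L$ (which for a SODE is exactly the content of $\eta_L(\Gamma)=-E_L$, since $\eta_L(\Gamma)=g-u^\alpha\tfrac{\partial L}{\partial u^\alpha}$ and $-E_L=L-u^\alpha\tfrac{\partial L}{\partial u^\alpha}$) reproduces (\ref{ecsHlocframe}). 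Specialising $Z=\partial/\partial q^\alpha$ gives $Z^V(L)=\partial L/\partial u^\alpha$, $Z^C(L)=\partial L/\partial q^\alpha$, hence the final displayed equations, which upon expanding $\Gamma$ coincide with the Herglotz equations (\ref{Bi}).

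I expect the main obstacle to be the bookkeeping in the $s$-direction. Evaluated on $Z^C$, the identity $\mathcal{L}_\Gamma\eta_L=\tfrac{\partial L}{\partial s}\eta_L$ naturally involves the $s$-component $g$ of $\Gamma$, whereas $(3)$ is written in terms of $L$; the two agree only after invoking the second equation of $(2)$, which forces $g=L$. Concretely, $(3)$ in the standard frame yields the $n$ relations $\Gamma^\beta W_{\beta\alpha}+u^\beta\partial^2_{q^\beta u^\alpha}L+g\,\partial^2_{s u^\alpha}L-\partial_{q^\alpha}L=\tfrac{\partial L}{\partial s}\partial_{u^\alpha}L$, and by regularity of $W_{\alpha\beta}$ these determine $\Gamma^\beta$ uniquely once $g$ is fixed; without the $s$-equation they leave $\Gamma$ ambiguous up to a Reeb correction $\Gamma\mapsto\Gamma+\psi\,\mathcal{R}_L$. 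Thus the delicate point, which I would state explicitly, is that the $s$-dynamics $\dot s=L$ (equivalently $\eta_L(\Gamma)=-E_L$) must be carried alongside $(3)$ for the equivalence with $(1)$ and $(2)$ to close; it is precisely this condition, supplied by the second equation of $(2)$, that reconciles $g$ with $L$, after which the remaining steps are linear algebra with the regular Hessian.
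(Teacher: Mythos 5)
Your proposal is correct and follows essentially the same route as the paper: the Reeb contraction plus Cartan's formula for $(1)\Leftrightarrow(2)$, and evaluation of $\mathcal{L}_{\Gamma}\eta_{L}-\frac{\partial L}{\partial s}\eta_{L}$ on complete lifts for $(2)\Leftrightarrow(3)$. You also supply details the paper leaves implicit --- the contraction with $\rb_{L}$ that justifies $\eta_{L}(\Gamma)=-E_{L}$, the $C^{\infty}(Q)$-linearity in $Z$ that reduces condition $(3)$ to a frame check, and, most importantly, the observation that the $s$-component $g=L$ must be carried alongside $(3)$ for the converse implication to close, since $(3)$ alone determines $\Gamma$ only up to a Reeb-type correction --- this last point being a genuine imprecision hidden in the paper's remark that $3.\Rightarrow 2.$ is ``easily proved.''
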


\begin{proof}

$1. \Rightarrow 2.$ Assume $\Gamma$ is the unique solution of (\ref{Gamma0}) whose local expression is given in (\ref{elocSODE}). Then, by the isomorphism (\ref{isobL}) we have
$$i_\Gamma d\eta_{L} + \eta_{L}(\Gamma) \eta_{L} = dE_L - \left(\rb _L(E_L) + E_L\right)$$
i.e.
$$i_\Gamma d\eta_{L} -dE_L =  -\derpar{L}{s}  \eta_{L} ,$$
where we used that $\eta_{L}(\Gamma) = - E_L$.
Therefore,
$$\Li_{\Gamma} \eta_{L} = i_\Gamma d\eta_{L} + di_\Gamma \eta_{L} = i_\Gamma d\eta_{L} - dE_L =  -\derpar{L}{s}  \eta_{L} .$$

$2. \Rightarrow 3.$ If we apply the relation in \ref{SODEr} to a complete lift $Z^C$ we obtain
$$
\begin{array}{rcl}
0 & = & \left(\Li_{\Gamma} \eta_{L} - \ds\frac{\partial L}{\partial s} \eta_{L}\right) \left(Z^C\right) \\ \noalign{\medskip}
& = & \Gamma \left(\eta_{L}\left(Z^C\right)\right) - \eta_{L} \left([\Gamma, Z^C]\right) - \ds\frac{\partial L}{\partial s} \eta_{L} \left(Z^C\right) \\ \noalign{\medskip}
& = & - \Gamma \left(Z^V\left(L\right)\right) + Z^C\left(L\right) + \ds\frac{\partial L}{\partial s} \left(Z^V\left(L\right)\right) .
\end{array}
$$

When applied to a vertical lift $Z^V$ or to $\partial/\partial s$ we obtain the identity $``0=0"$. So the result follows.

The implications $2. \Rightarrow 1.$ and $3. \Rightarrow 2.$ are easily proved.
\end{proof}


\section{ G-invariant vector fields}\label{sec4}
  
Throughout the paper we will assume that the configuration space $Q$ of the Lagrangian system is equipped with a free and proper left Lie group action $\Phi: G \times Q \to Q$ of a connected Lie group $G$, which eventually will be the symmetry group of the Lagrangian system under consideration.

The projection $$\pi_Q : Q \to Q/G $$ on the set of equivalence classes gives $Q$ the structure of a principal fibre bundle with structure group $G$.
 
Let $\Phi_g : Q \to Q$ and $\Phi_q : G \to Q$ be the maps
$$\Phi_g(q)=\Phi(g,q)=g\,q, \quad \Phi_q(g)=\Phi(g,q)=g\,q\, .$$

\begin{definition}
A vector field $W \in \mathfrak{X}(Q)$ is said to be \textit{G-invariant} if
$$\left(\Phi_g\right)_* (q) \left(W(q)\right) = W \left(\Phi_g(q)\right) ,$$
for all $g \in G$, $q \in Q$.
\end{definition}	
In that case, the relation
\begin{equation}\label{defred}
\breve{W} \circ \pi_Q=T\pi_Q \circ W
\end{equation}
uniquely defines a \textit{reduced vector field} $\breve{W} \in \mathfrak{X}(Q/G)$.

That is,
$$
\breve{W} \left(\pi_Q(q)\right) = \left(\pi_Q\right)_* (q) \left(W(q)\right) , \quad q \in Q .
$$

Likewise, if $F : Q \to \r$ is an invariant function on $Q$ (say, $F \circ \Phi_g = F$), it can be reduced to a function $f : Q/G \to \r$ with $f \circ \pi_Q = F$. We also have that
\begin{equation}\label{wf}
W\left(F\right) = W (f \circ \pi_Q) = \breve{W}(f) \circ \pi_Q ,
\end{equation}
that is, $\breve{W}(f)$ is the reduced function on $Q/G$ of the invariant function $W(F)$ on $Q$.

From these relations, we have the following:
\begin{prop}
If $\phi$ is an integral curve of an invariant vector field $W \in \mathfrak{X}(Q)$, then $\breve{\phi} = \pi_Q \circ \phi$ is an integral curve of the reduced vector field $\breve{W} \in \mathfrak{X}(Q/G)$.
\end{prop}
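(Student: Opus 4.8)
The plan is to unwind the definitions of \emph{integral curve} and of the \emph{reduced vector field}, and to bridge them by the chain rule for the tangent functor. Recall that a curve $\phi\colon I \subset \r \to Q$ is an integral curve of $W$ precisely when $\dot\phi(t) = W(\phi(t))$ for every $t \in I$, and that $\breve W$ is characterised by the projectability relation \eqref{defred}, namely $\breve W \circ \pi_Q = T\pi_Q \circ W$. The whole statement is then a formal consequence of these two facts.

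First I would compute the velocity of the projected curve $\breve\phi = \pi_Q \circ \phi$. Differentiating the composition and using the chain rule for tangent maps gives $\dot{\breve\phi}(t) = T_{\phi(t)}\pi_Q\big(\dot\phi(t)\big)$ for each $t\in I$; this is the only analytic step, and it is immediate, since the velocity of a composed curve is the image of the velocity of the inner curve under the tangent map of the outer map. Next I would substitute the integral-curve condition $\dot\phi(t) = W(\phi(t))$ into this expression, obtaining $\dot{\breve\phi}(t) = T\pi_Q\big(W(\phi(t))\big)$. Applying \eqref{defred} evaluated at the point $\phi(t)$ then yields $T\pi_Q\big(W(\phi(t))\big) = \breve W\big(\pi_Q(\phi(t))\big) = \breve W\big(\breve\phi(t)\big)$. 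Combining the two displays gives $\dot{\breve\phi}(t) = \breve W(\breve\phi(t))$ for all $t \in I$, which is exactly the assertion that $\breve\phi$ is an integral curve of $\breve W$.

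There is no genuine obstacle here: the argument is essentially a one-line chain-rule computation. The only points to keep track of are that $\pi_Q$ is a surjective submersion, so that $T\pi_Q$ is defined at every point and $\breve\phi$ is smooth on the same interval $I$, and that the $G$-invariance of $W$ has already been used upstream to guarantee, via \eqref{defred}, that $\breve W$ is a well-defined vector field on $Q/G$. Once that existence is granted, the present statement follows formally, with invariance playing no further role in the verification itself.
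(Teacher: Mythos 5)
Your proof is correct and is exactly the argument the paper has in mind: the paper states this proposition without an explicit proof, presenting it as an immediate consequence of the defining relation $\breve{W}\circ\pi_Q = T\pi_Q\circ W$, which is precisely the chain-rule computation you carry out.
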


We denote by $ \xi_Q$ the \textit{fundamental vector field} on $Q$ associated to an element $ \xi$ of the Lie algebra $\mathfrak{g}$ of $G$,
$$ \xi_Q(q) = \left(\Phi_q\right)_* (e) (  \xi) .$$
We have,
\begin{itemize}
	\item A function $F \in C^{\infty}(Q)$ is invariant if, and only if, $ \xi_Q(F) = 0$ for all $ \xi  \in \mathfrak{g}$.
	\item A vector field $W \in \mathfrak{X}(Q)$ is invariant if, and only if, $[W, \xi_Q] = 0$  for all $ \xi  \in \mathfrak{g}$.
\end{itemize}

\begin{remark}{\rm }
The Lie bracket $[X,Y]$ of two $G$-invariants vector fields $X$ and $Y$ on $Q$ is also $G$-invariant.
\end{remark}

From now we will use local coordinates on $Q$ defined as follows. Let $U\subset Q/G$ be an open set over which $Q$ is locally trivial, so that
$(\pi_Q)^{-1}(U)\simeq
U\times G\, .$

 We will use coordinates $(q^\alpha)=(q^i,q^a)$ on a suitable open subset $(\pi_Q)^{-1}(U)$  (containing $U\times \, \{e\}$, where $e$ is the neutral element of the group $G$) such that $(q^i)$ are coordinates on $U$, and $(q^a)$  are coordinates on the fibre $G$. 
 
 Then, the local expression of the projection $\pi_Q: Q \to Q/G$ is:
\begin{equation}\label{TriviCoord}
	\begin{array}{ccc}
		(\pi_Q)^{-1}(U) \simeq U\times G & \longrightarrow & U \\
		\noalign{\medskip} (q^\alpha)=(q^i,q^a) & \longmapsto & (q^i),
	\end{array}
\end{equation}
where $1\leq a\leq \mathop{dim}G=d$ , \, $1\leq i\leq \mathop{dim}Q-\mathop{dim}G=n-d$.

The left action of $G$ onto $(\pi_Q)^{-1}(U)\simeq U\times G$ is given
by
\[
\Phi_g(q,h)  =  (q,gh),
\]
so, in these coordinates, we have
$$\Phi_g(q^i,q^a)=\left(q^i,(\Phi_g)^b(q^i,q^a)\right) .$$

Let
\begin{equation}\label{locX}
	X  = X^i   \fpd{}{q^i} + {\tilde X}^a   \fpd{}{q^a}
\end{equation}
be the local expression of a vector  field $X \in \mathfrak{X}(Q)$.
If $ X $ is $G$-invariant then
we can conclude that the functions $X^i$ on $Q$ are invariant and they can be identified with functions on $Q/G$. 
The reduced  vector  field $\breve X  $ we had defined in (\ref{defred}) is given by
${\breve X}  = {\breve X}^i   \fpd{}{q^i}$,
where ${\breve X}^i \circ \pi_Q=X^i.$




   \subsection{ Connections on  $\pi_Q: Q \to Q/G$ and G-invariant vector fields} \label{se52n}

Suppose we are given a principal connection on the principal bundle $\pi_Q: Q \to Q/G$. We will consider  three sets of vector fields $\{X_i\}$, 
$\{\widetilde{E}_a\}$ and $\{\widehat{E}_a\}$ on $Q$. 

The first set, $\{X_i\}$, is given by the horizontal lifts of a coordinate basis of
vector fields ${\partial}/{\partial q^i}$ on $Q/G$ by the given principal connection. These
vector fields are $G$-invariant by construction, and they form a basis of the
horizontal subspace at any point. 

The other two sets of vector fields, $\{\widetilde{E}_a\}$ and $\{\widehat{E}_a\}$, will both form a basis for the vertical space of $\pi_Q$ at each point.

The vector fields $\{\widetilde{E}_a=(E_a)_Q\}$  are the fundamental vector fields on
$Q$, associated to a basis $\{E_a\}$ of the Lie algebra
$\mathfrak{g}$. They are in general not invariant vector fields. Since they are vertical by construction, we can write
\begin{equation}\label{K}
\widetilde{E}_a=K^b_a \derpar{}{q^b},
\end{equation}
for some non-singular matrix-valued function $(K^b_a)$. 

The quasi-velocities  
corresponding to the basis $X_i,\widetilde{E}_a$ will be denoted by $(v^i,v^a)$, so for every tangent vector $v_m$, $v_m=v^i\,X_i+v^a \,\widetilde{E}_a $.

The vector fields  $\widehat{E}_a$ in the last set are defined as
\begin{equation}\label{Ehat}
\widehat{E}_a\left(q,g\right) =\left({Ad_{g} \,E_a}\right)_Q\left(q,g\right)
=\widetilde{Ad_{g} \,E_a}(q,g)
\end{equation}
where we are using the local trivialization (\ref{TriviCoord}), and $Ad$ is the adjoint map and the notation $ \xi_Q$ refers again to the fundamental vector field of $ \xi \in\g$. One easily verifies that these vector fields are all invariant.

The relation between $\widehat{E}_a$ and $ \widetilde{E}_a$ can be
expressed as
\begin{equation} \label{A}
\widehat{E}_a(q,g)\, = \, A^b_a(g)\widetilde{E}_b(q,g),
\end{equation}
where $(A^b_a(g))$ is the matrix
representing $Ad_{g}: \mathfrak{g}\to  \mathfrak{g}$ with respect to the basis $\{E_a \}$ of
$\mathfrak{g}$. In particular $A^b_a(e)=\delta^b_a$.

If we set
\begin{equation} \label{gamma}
X_i=\ds\frac{\partial }{\partial  q^i}-\gamma^b_i(q^i,q^a)\widehat{E}_b
\end{equation}
the invariance of $X_i$ implies that $\partial \gamma^b_i/\partial q^a=0$. 

The quasi-velocities  
corresponding to the basis $X_i,\hat{E}_a$ will be denoted by $(v^i,w^a)$, so for every tangent vector $v_m$, $v_m=v^i\,X_i+w^a \,\hat{E}_a $.

Let $C_{ab}^c$ be the structure constants of the Lie algebra $\mathfrak{g}$, that is $[E_a,E_b]=C_{ab}^c E_c$, and $K^a_{ij}$ be the components of the curvature of the principal connection (with respect to the vertical frame ${\widehat E}_a$),
$$
K^a_{ij}=\derpar{\gamma^a_j}{q^i}-\derpar{\gamma^a_i}{q^j}+\gamma_i^b\gamma^c_jC^a_{bc}\, .
$$ 
One easily verifies that the Lie brackets of the vector fields of interest  are as follows (see e.g.\ \cite{MC}).
\begin{lem}
\begin{equation}\label{e2}
\begin{array}{lll}
(1) \, \,   [\widetilde{E}_a,\widetilde{E}_b ]=-C_{ab}^c\widetilde{E}_c, &  (2)  \, \,[\widehat{E}_a,\widehat{E}_b ]= C_{ab}^c\widehat{E}_c, & (3) \, \,
[X_i,\widetilde{E}_a ]=0, \\[2mm] \,   (4)\, \,[X_i, \widehat{E}_a ]= \Upsilon^b_{ia} \widehat{E}_b,  &
(5)\, \,[X_i ,X_j]=-K^a_{ij} \widehat{E}_a,   & (6)\, \, [\widetilde{E}_a ,\widehat{E}_b]=0 ,
\end{array}
\end{equation}
where $\Upsilon_{ia}^b = X_i(A_a^c)
\bar{A}_c^b$.
\end{lem}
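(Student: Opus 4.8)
The plan is to split the six brackets into two groups: the four that involve only the two vertical frames $\{\widetilde E_a\}$ and $\{\widehat E_a\}$, namely (1), (2), (6), and those that involve the horizontal fields $\{X_i\}$, namely (3), (4), (5). The organizing principle is the characterization recorded above that $W\in\vectorfields{Q}$ is $G$-invariant if and only if $[W,\xi_Q]=0$ for all $\xi\in\g$, together with the behaviour of fundamental vector fields under the bracket. Throughout I would use that $\{X_i\}$ and $\{\widehat E_a\}$ are invariant by construction, whereas $\widetilde E_a=(E_a)_Q$ is fundamental.

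First I would dispatch (1), (3), (6). Since $\widetilde E_a=(E_a)_Q$ and $\xi\mapsto\xi_Q$ is a Lie algebra anti-homomorphism for a left action, $[\widetilde E_a,\widetilde E_b]=-([E_a,E_b])_Q=-C^c_{ab}\widetilde E_c$, which is (1). Identities (3) and (6) are then both instances of ``an invariant field commutes with a fundamental field'': $X_i$ and $\widehat E_b$ are invariant while $\widetilde E_a=(E_a)_Q$ is fundamental, so $[X_i,\widetilde E_a]=0$ and $[\widetilde E_a,\widehat E_b]=-[\widehat E_b,(E_a)_Q]=0$. For (2) I would work in the local trivialization \eqref{TriviCoord}: using $\exp(t\,Ad_g E_a)\,g=g\exp(tE_a)$ one sees that along each fibre $\widehat E_a$ is the left-invariant field generated by $E_a$, so $[\widehat E_a,\widehat E_b]$ reduces to the bracket of left-invariant fields and equals $C^c_{ab}\widehat E_c$; alternatively one expands $\widehat E_a=A^b_a\widetilde E_b$ via \eqref{A} and combines (1) with the $Ad$-equivariance identity $A^c_aA^d_bC^e_{cd}=C^f_{ab}A^e_f$.

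Next I would obtain (4) from (3) and the change of frame \eqref{A}. Writing $\widehat E_a=A^b_a\widetilde E_b$ and applying Leibniz,
\[
[X_i,\widehat E_a]=X_i(A^b_a)\,\widetilde E_b+A^b_a\,[X_i,\widetilde E_b]=X_i(A^b_a)\,\widetilde E_b ,
\]
where the last term vanishes by (3); substituting $\widetilde E_b=\bar A^c_b\widehat E_c$ (the inverse of \eqref{A}) gives $[X_i,\widehat E_a]=X_i(A^b_a)\bar A^c_b\,\widehat E_c=\Upsilon^c_{ia}\widehat E_c$, as claimed.

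Finally, (5) is the curvature identity and I expect it to be the main obstacle, being the only genuinely computational step and the one most sensitive to sign conventions. The key structural observation is that $X_i$ is $\pi_Q$-related to the coordinate field $\partial/\partial q^i$ on $Q/G$, so $[X_i,X_j]$ is $\pi_Q$-related to $[\partial/\partial q^i,\partial/\partial q^j]=0$; hence $[X_i,X_j]$ is vertical and must be a combination of the $\widehat E_a$. To pin down the coefficients I would compute from $X_i=\partial/\partial q^i-\gamma^b_i\widehat E_b$, using three facts: that the components of $\widehat E_a$ depend only on the fibre coordinates, so $[\partial/\partial q^i,\widehat E_a]=0$ and $\widehat E_a(\gamma^b_j)=0$; that $\partial\gamma^b_i/\partial q^a=0$ by invariance of $X_i$; and the bracket (2). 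Expanding and collecting the surviving contributions, the terms $-\partial\gamma^c_j/\partial q^i$, $+\partial\gamma^c_i/\partial q^j$ and the $\gamma\gamma C$ piece coming from (2) assemble into (minus) the curvature components $K^a_{ij}$, giving $[X_i,X_j]=-K^a_{ij}\widehat E_a$. The delicate points are the index bookkeeping in these contractions and the consistent tracking of the signs fixed by (1)--(2).
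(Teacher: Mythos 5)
Your proposal is correct in its overall strategy and goes far beyond what the paper actually writes: the paper's entire proof consists of the single observation that (3) and (6) follow from the $G$-invariance of $X_i$ and $\widehat{E}_b$ (exactly your argument, via $[W,\xi_Q]=0$ characterizing invariance), with everything else deferred to the reference [MC]. Your treatment of (1), (3), (4), (6) is sound, and your primary route to (2) — identifying $\widehat{E}_a$ along each fibre of the trivialization with the left-invariant vector field generated by $E_a$, while $\widetilde{E}_a$ is the right-invariant one — is the clean way to get the opposite signs in (1) and (2). One caution on your ``alternative'' derivation of (2): expanding $\widehat{E}_a=A^b_a\widetilde{E}_b$ and ``combining (1) with $Ad$-equivariance'' is not enough, because the derivative terms $\widetilde{E}_c(A^d_b)=C^d_{ce}A^e_b$ contribute essentially (without them you would land on $-C^c_{ab}\widehat{E}_c$, the wrong sign); so that alternative, as sketched, would fail.

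The one genuine soft spot is (5), precisely where you deferred the ``delicate sign tracking.'' Carrying out your own outline literally — using $[\partial/\partial q^i,\widehat{E}_a]=0$, $\widehat{E}_a(\gamma^b_j)=0$, $\partial\gamma^b_i/\partial q^a=0$ and bracket (2) as printed — gives
\begin{equation*}
[X_i,X_j]=-\Bigl(\derpar{\gamma^a_j}{q^i}-\derpar{\gamma^a_i}{q^j}-\gamma^b_i\gamma^c_jC^a_{bc}\Bigr)\widehat{E}_a ,
\end{equation*}
whereas the paper's displayed curvature has $+\gamma^b_i\gamma^c_jC^a_{bc}$. So the terms do \emph{not} assemble into the printed $K^a_{ij}$: there is a sign discrepancy in the quadratic term, which is an internal inconsistency between the paper's formula for $K^a_{ij}$ and its own convention (2), rather than an error in your method. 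You should complete the computation explicitly and state which sign convention for $K^a_{ij}$ it actually produces; asserting that the pieces ``assemble into $-K^a_{ij}$'' without doing so leaves the only nontrivial claim of the lemma unverified.
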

\begin{proof}
We have that $(3)$ and $(6)$ are consequence of that $X_i$ and  $\widehat{E}_b$ are $G$-invariant.
%

\end{proof}

\section{Symmetry reduction of a Herglotz vector field}\label{sec5}

In this section, we show that, if the contact Lagrangian is $G$-invariant, then so is its Lagrangian SODE. The integral curves of the reduced SODE will provide the contact Lagrange-Poincar\'e-Herglotz equations.

Suppose there is an action $\Phi : G \times Q \to Q$ which is free and proper. It induces a lifted action $\Phi^{TQ} = T\Phi : G \times TQ \to TQ$ on the tangent bundle $TQ$ given by
$$\Phi^{TQ} \left(g,v_q\right) = T\Phi_g (v_q) = \left(\Phi_g\right)_*(q) (v_q),$$
where $v_q \in TQ$ and $g \in G$. The lifted action $\Phi^{TQ}$ is also free and proper and, therefore, $\pi_{TQ} : TQ \to TQ/G$ is a principal fiber bundle.

In order to give an action of $G$ on $TQ \times \r$, we will consider the following induced action
$$\Phi^{TQ\times \r}_g = \left(\Phi_g^{TQ}, Id_{\r}\right) : TQ \times \r \to TQ \times \r ,$$
so that, $G$ acts on $\r$ as the identity
$$\Phi^{TQ\times \r}\left(g,(v_q,s)\right) = (\Phi^{TQ}(g,v_q) ,s) = \left(\left(\Phi_g\right)_*(q) (v_q), s\right) $$and then the coordinate function $s$ is $G$-invariant. Therefore, the projection
$$\pi_{TQ \times \r} : TQ \times \r \to \left(TQ \times \r\right)/G \simeq \ds\frac{TQ}{G} \times \r$$ 
gives $TQ \times \r$ the structure of a principal fiber bundle over $\frac{TQ}{G} \times \r$, provided that the action $\Phi^{TQ\times \r}$ is free and proper.


As before, we will denote by $ \xi_Q$ the fundamental vector field corresponding to $ \xi \in \mathfrak{g}$. From the definition of the complete lift in Section~\ref{sec2}, it follows that the fundamental vector field $ \xi _{TQ \times \r}$ given by the action $\Phi^{TQ\times \r}$ is actually the complete lift $ \xi_Q^C$ of the fundamental vector field  $ \xi_Q$ given by the action on $Q$, that is,
$$ \xi _{TQ \times \r} =  \xi_Q^C .$$

Now, from the properties in (\ref{lifts-vectors13}) and since $\partial / \partial s$ is $G$-invariant (namely, $[\partial / \partial s, \xi_Q^C] = 0$), we have the following.
\begin{lem} \label{lem51}  If the frame $\{Z_\alpha\}$ on $Q$  is invariant, then the frame $\left\{Z_\alpha^C,Z_\alpha^{V}, \frac{\partial}{\partial s}\right\}$ on $TQ \times \r$ is also invariant, with respect to the action $\Phi^{TQ\times \r}$ on $TQ \times \r$,

\end{lem}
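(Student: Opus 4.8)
The plan is to reduce the claim to the bracket characterization of invariance together with the lift identities \eqref{lifts-vectors13}. Since $G$ is connected, a vector field on any $G$-manifold is invariant precisely when it commutes with all fundamental vector fields; this is the criterion already recorded for $Q$, and it applies verbatim on $TQ\times\r$. The text has established that the fundamental vector fields of the induced action $\Phi^{TQ\times\r}$ are exactly the complete lifts $\xi_Q^C$, so it suffices to check that each of $Z_\alpha^C$, $Z_\alpha^V$ and $\partial/\partial s$ has vanishing bracket with $\xi_Q^C$ for every $\xi\in\mathfrak g$.

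First I would use the hypothesis that the frame $\{Z_\alpha\}$ is invariant on $Q$, which by the same criterion means $[Z_\alpha,\xi_Q]=0$ for all $\alpha$ and all $\xi\in\mathfrak g$. Then the two lift identities in \eqref{lifts-vectors13}, namely $[X^C,Y^C]=[X,Y]^C$ and $[X^C,Y^V]=[X,Y]^V$, do all the work. Applying them with $X=\xi_Q$ and $Y=Z_\alpha$ gives
\[
[\xi_Q^C, Z_\alpha^C]=[\xi_Q,Z_\alpha]^C=0, \qquad [\xi_Q^C, Z_\alpha^V]=[\xi_Q,Z_\alpha]^V=0,
\]
where I also use that the complete and vertical lifts of the zero vector field vanish (immediate from \eqref{lifts-vectors1}). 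Antisymmetry of the bracket then yields $[Z_\alpha^C,\xi_Q^C]=[Z_\alpha^V,\xi_Q^C]=0$.

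For the remaining generator I would simply invoke the fact already noted in the text that $\partial/\partial s$ is $G$-invariant, i.e.\ $[\partial/\partial s,\xi_Q^C]=0$; this is clear because $\partial/\partial s$ is tangent to the $\r$-factor on which $G$ acts trivially, while $\xi_Q^C$ is tangent to the $TQ$-factor, so the two fields have disjoint coordinate dependence. Combining the three computations, every element of the frame $\{Z_\alpha^C,Z_\alpha^V,\partial/\partial s\}$ commutes with every $\xi_Q^C$, hence the frame is invariant under $\Phi^{TQ\times\r}$.

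I expect no serious obstacle here: the content is entirely in the bracket identities \eqref{lifts-vectors13}, which Remark \ref{ext} guarantees continue to hold on $TQ\times\r$ after the natural extension of the lift operations. The only points that deserve a line of care are the appeal to the connectedness of $G$ to pass between the pushforward definition of invariance and its infinitesimal (vanishing-bracket) form, and the verification that these lifts extend to the product $TQ\times\r$ as asserted in Remark \ref{ext}; both are standard and already in place in the excerpt.
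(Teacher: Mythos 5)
Your proposal is correct and follows essentially the same route as the paper: the text justifies the lemma precisely by combining the bracket identities \eqref{lifts-vectors13} with the identification $\xi_{TQ\times\r}=\xi_Q^C$ and the observation that $[\partial/\partial s,\xi_Q^C]=0$, exactly as you do. Your write-up merely makes explicit the appeal to the infinitesimal (vanishing-bracket) criterion for invariance of a connected group action, which the paper leaves implicit.
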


Without loss of generality we may suppose that the local frame  $\{Z_\alpha\}$ on $Q$, consists of only invariant vector fields (for example, we can use the invariant frame $\{X_i,{\widehat E}_a\}$ that we had introduced in Section~\ref{se52n}).

We will use coordinates $( q^\alpha)=(q^i, q^a)$ on $Q$ that are adapted to the principal fiber bundle structure $Q\to Q/G$, as explained in Section~\ref{sec4}.

The quasi-velocities  
corresponding to the basis $X_i,\widehat{E}_a$ will be denoted by $(v^i,w^a)$, so for every tangent vector $v_q$, $v_q=v^i\,X_i+w^a \,\widehat{E}_a $, 
then the set $(q^i,q^a,v^i,w^a,s)$ represents coordinates on $TQ \times \r$ adapted to the basis 
$\{ X_i,  \widehat{E}_a, \frac{\partial}{\partial s}  \}$ of vector fields.

  {\it The coordinate $s$ is $G$-invariant, since $\xi_Q^C(s)=0$. The coordinates $q^i$ are also $G$-invariant since the fundamental vector fields $ \xi_Q$ are vertical with
respect to the projection $\pi_Q:Q \to Q/G$, and therefore we have $0= \xi_Q(q^i) = \xi_Q^C(q^i) $.

Now we will  see that the   coordinate functions $v^i,w^a$    are also $G$-invariant functions on $TQ \times \r$, in fact 
  we shall prove that
$\xi_Q^C(v^i)=0$ and $\xi_Q^C(w^a )=0.$}

 Let $\theta$ be a  $1$-form on $Q$. We define   the function $\overrightarrow{\theta}$  on $TQ\times \r$  as
$\overrightarrow{\theta}(v_q,s) =
   \theta(v_q).$
   
In local coordinates $(q^\alpha,u^\alpha,s)$, if $\theta=\theta_\alpha \, d  q^\alpha$,
then
\begin{equation}\label{thetaarrow}\overrightarrow{\theta  } =
\theta_\alpha\, u^\alpha  .
\end{equation}
 From Remark \ref{ext}, (\ref{VClifts}) and (\ref{thetaarrow}),  we can conclude the following relations.

\begin{lem}\label{pztfle}    
Let $Z$ be a vector field on  $Q$, $f$ a function  on
$Q$, and $\theta$  a $1$-form on $Q$. Then
\begin{equation}
\label{proppztfle}
Z^C(f^V)  =  (Z(f))^V, \quad Z^{V }(f^V)= 0, \quad
Z^C(\overrightarrow{\theta  }) = \overrightarrow{({\mathcal L}_Z\theta)  }, \quad
Z^{V }(\overrightarrow{\theta  })   =      ( \theta(Z))^V.
\end{equation}
\end{lem}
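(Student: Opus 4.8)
The plan is to treat the four identities separately, grouping them by the lift involved. The two identities with $f^V$ are immediate: $Z^C(f^V) = (Z(f))^V$ and $Z^V(f^V) = 0$ are verbatim the corresponding defining relations in (\ref{VClifts}) for the complete and vertical lifts, now read on $TQ \times \r$ via Remark \ref{ext} (under which $\partial/\partial s$ annihilates everything in sight). Nothing beyond citing (\ref{VClifts}) is required for these, so the content of the lemma is really the two remaining relations about $\overrightarrow{\theta}$.

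For those, I would fix a vector field $Z = Z^\alpha \frac{\partial}{\partial q^\alpha}$ and a $1$-form $\theta = \theta_\alpha\, dq^\alpha$ on $Q$, so that by (\ref{thetaarrow}) the associated function reads $\overrightarrow{\theta} = \theta_\alpha u^\alpha$, with $\theta_\alpha$ depending on the base coordinates only. The vertical identity is then one line: from (\ref{lifts-vectors1}) we have $Z^V = Z^\alpha \frac{\partial}{\partial u^\alpha}$, hence $Z^V(\overrightarrow{\theta}) = Z^\gamma \theta_\gamma$, which is exactly the vertical lift of the function $\theta(Z) = \theta_\alpha Z^\alpha$ on $Q$, i.e.\ $(\theta(Z))^V$. (A coordinate-free check, if preferred: $Z^V$ at $v_q$ is the vertical vector determined by $Z(q)$, so $Z^V(\overrightarrow{\theta})(v_q) = \frac{d}{dt}\big|_{t=0}\theta(v_q + tZ(q)) = \theta(Z(q))$ by fibrewise linearity of $\theta$.)

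The complete-lift identity $Z^C(\overrightarrow{\theta}) = \overrightarrow{(\mathcal{L}_Z\theta)}$ is the only step that is not immediate, and the place where I expect the main (purely computational) friction. Using $Z^C = Z^\alpha \frac{\partial}{\partial q^\alpha} + u^\beta \frac{\partial Z^\alpha}{\partial q^\beta}\frac{\partial}{\partial u^\alpha}$ from (\ref{lifts-vectors1}), one computes $Z^C(\overrightarrow{\theta}) = Z^\beta \frac{\partial\theta_\alpha}{\partial q^\beta}\, u^\alpha + \theta_\gamma\frac{\partial Z^\gamma}{\partial q^\beta}\,u^\beta$, while feeding the standard formula $(\mathcal{L}_Z\theta)_\alpha = Z^\beta \frac{\partial\theta_\alpha}{\partial q^\beta} + \theta_\beta \frac{\partial Z^\beta}{\partial q^\alpha}$ into (\ref{thetaarrow}) gives $\overrightarrow{(\mathcal{L}_Z\theta)} = (\mathcal{L}_Z\theta)_\alpha u^\alpha$; a relabelling of the dummy indices in the second summand identifies the two expressions. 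The risk here is entirely bookkeeping --- getting the Lie-derivative formula and the index contractions right --- so to hedge I would also confirm it coordinate-freely: since $Z^C$ generates $T\phi_t$ when $\phi_t$ is the flow of $Z$ (the first Remark of Section~\ref{sec2}), and $\mathcal{L}_Z\theta = \frac{d}{dt}\big|_{t=0}\phi_t^*\theta$, both sides evaluate at $v_q$ to $\frac{d}{dt}\big|_{t=0}\theta_{\phi_t(q)}\!\left(T\phi_t(v_q)\right)$ --- the left-hand side directly from the definition of $\overrightarrow{\theta}$, the right-hand side from the definition of the pullback --- which forces equality with no index manipulation at all.
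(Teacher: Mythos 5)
Your proof is correct and is essentially the argument the paper has in mind: the paper gives no explicit proof, merely asserting that the lemma follows from Remark \ref{ext}, the defining relations (\ref{VClifts}) and the coordinate expression (\ref{thetaarrow}), and your coordinate computation (with the flow-based cross-check for the complete-lift identity) is exactly the routine verification being invoked.
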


%
%
%
%
%

\begin{lem}\label{pztfle1}
If  $\{Z_\alpha\}$ is a local frame on $Q$
and $\{\theta^\alpha\}$ is its dual basis, then the local quasi-velocities
 $v^\alpha$  on $TQ\times \r$ given by 
 $u^\beta = v^\alpha Z^\beta_\alpha$
  can in fact be represented by the linear functions $
v^\alpha   = \overrightarrow{(\theta^\alpha)  }$.
\end{lem}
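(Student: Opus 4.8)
The plan is to prove the identity $v^\alpha=\overrightarrow{\theta^\alpha}$ by a direct pointwise evaluation, relying only on the definition of the arrow construction $\theta\mapsto\overrightarrow{\theta}$ and on the duality relation $\theta^\alpha(Z_\beta)=\delta^\alpha_\beta$ between the frame and its coframe.

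First I would recall the two ingredients. By the definition of the quasi-velocities, an arbitrary tangent vector decomposes as $w_q=v^\beta(w_q)\,Z_\beta(q)$, so the functions $v^\beta$ are precisely the coefficients of $w_q$ in the frame $\{Z_\beta\}$. On the other hand, the function $\overrightarrow{\theta^\alpha}$ on $TQ\times\r$ is defined by $\overrightarrow{\theta^\alpha}(w_q,s)=\theta^\alpha(w_q)$, an expression that does not involve the fibre coordinate $s$.

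The key step is then a single evaluation: applying the one-form $\theta^\alpha$ to the frame decomposition of $w_q$ and using linearity together with duality gives
$$
\overrightarrow{\theta^\alpha}(w_q,s)=\theta^\alpha\big(v^\beta(w_q)\,Z_\beta(q)\big)=v^\beta(w_q)\,\delta^\alpha_\beta=v^\alpha(w_q).
$$
Since this holds at every point of $TQ\times\r$, the two functions coincide. As a coordinate cross-check I would note that, writing $\theta^\alpha=\theta^\alpha_\beta\,dq^\beta$ with $(\theta^\alpha_\beta)$ the inverse matrix of $(Z^\alpha_\beta)$, formula (\ref{thetaarrow}) gives $\overrightarrow{\theta^\alpha}=\theta^\alpha_\beta u^\beta$, and substituting $u^\beta=v^\gamma Z^\beta_\gamma$ recovers $\theta^\alpha_\beta Z^\beta_\gamma v^\gamma=\delta^\alpha_\gamma v^\gamma=v^\alpha$.

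I do not expect any genuine obstacle here: the statement is simply the duality between $\{Z_\alpha\}$ and $\{\theta^\alpha\}$, transported to $TQ\times\r$ through the arrow map. The only point deserving a line of justification is that the arrow construction ignores the coordinate $s$, so that the equality of functions—which is really an equality on each fibre $T_qQ$—lifts verbatim to $TQ\times\r$; this is immediate from the defining formula $\overrightarrow{\theta}(v_q,s)=\theta(v_q)$.
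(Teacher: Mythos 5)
Your proof is correct and takes essentially the same route as the paper: the paper's own proof is exactly your ``coordinate cross-check'' ($\overrightarrow{\theta^\alpha}=\theta^\alpha_\beta u^\beta=\theta^\alpha_\beta v^\gamma Z^\beta_\gamma=v^\alpha$), and your intrinsic pointwise evaluation is just the coordinate-free phrasing of the same duality computation. Nothing is missing.
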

\begin{proof} It is a straightforward computation since
$$
Z_\alpha=Z_\alpha^\beta \derpar{}{q^\beta},\qquad \theta^\alpha=\theta^\alpha_\beta dq^\beta,
\qquad \overrightarrow{\theta^\alpha  }=\theta^\alpha_\beta u^\beta=
\theta^\alpha_\beta v^\gamma Z^\beta_\gamma=v^\alpha.
$$
\end{proof}

Now, from Lemma~\ref{pztfle} we obtain
$$ \xi_Q^C (v^\alpha  ) =   \xi_Q^C( \overrightarrow{\theta^\alpha  })
= \overrightarrow{({\mathcal L}_{
 \xi_Q}\theta^\alpha)  }.
$$
Finally, we observe that
$$
({\mathcal L}_{  \xi_Q}\theta^\alpha)(Z_\beta ) =  {\mathcal L}_{ \xi_Q}(\theta^\alpha( Z_\beta )) - \theta^\alpha ({\mathcal L}_{  \xi_Q} Z_\beta ) = \xi_Q(\delta^\alpha_\beta ) - \theta^\alpha ([\xi_Q, Z_\beta] )      =0,
$$
since $Z_\beta $ is $G$-invariant, therefore  $\xi_Q^C (v^\alpha  )=0.$ 

In particular, if $Z_\alpha= (X_i, \widehat{E_a}, \frac{\partial}{\partial s})$ and $\theta^\alpha= (\theta^i,\theta^a,ds)$ is its dual basis, then
$   \overrightarrow{ \theta^i}=v^i , \,\,
   \overrightarrow{\theta^a}=w^a$, and we get $\xi_Q^C (v^i  )=0$ and $\xi_Q^C (w^a  )=0$.

\subsection{Invariant Herglotz SODES} \label{newsec61}

For the remainder of the paper we will assume that the Lagrangian $L$ is invariant under the action $\Phi^{TQ\times \r}$, for a connected Lie group $G$.  In view of what we said before this means that $\xi_Q^C(L)=0$,  for all $ \xi   \in \mathfrak{g}$.

\begin{prop}\label{Gammainv}
	Given $L : TQ \times \r \to \r$ a regular invariant Lagrangian, then  the \textit{Herglotz SODE} $\Gamma$    associated with the Lagrangian function $L$     is $G$-invariant, that is
	$$0 = [\widetilde{E}_a^C, \Gamma] .$$
\end{prop}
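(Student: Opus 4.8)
The plan is to use the characterization of the Herglotz SODE given in Proposition~\ref{frameprop}: $\Gamma$ is the unique SODE satisfying $\Li_\Gamma \eta_L = \frac{\partial L}{\partial s}\,\eta_L$ and $\eta_L(\Gamma) = -E_L$, together with the fact that $\flat_L$ is a vector bundle isomorphism when $L$ is regular. Write $Y := \widetilde{E}_a^C = \xi_Q^C$ with $\xi = E_a$. Since the fundamental vector fields of the action $\Phi^{TQ\times\r}$ are precisely the complete lifts $\xi_Q^C$, establishing $[\widetilde{E}_a^C,\Gamma]=0$ for each $a$ is exactly the assertion that $\Gamma$ is $G$-invariant. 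The strategy is to show that $Z:=[Y,\Gamma]$ satisfies both $i_Z\,\eta_L = 0$ and $i_Z\,d\eta_L = 0$, whence $\flat_L(Z)=0$ forces $Z=0$.

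First I would record the invariance of the data entering the defining equations. Invariance of $L$ gives $Y(L)=0$. I would show $\Li_Y\eta_L = 0$ by splitting $\eta_L = ds - \theta_L$: on one hand $\Li_Y(ds) = d(Y(s)) = 0$ because $s$ is $G$-invariant ($\xi_Q^C(s)=0$); on the other hand, writing $\theta_L = dL\circ S$ and using the Leibniz-type identity $\Li_Y(dL\circ S) = (\Li_Y dL)\circ S + dL\circ(\Li_Y S)$ together with $\Li_Y dL = d(Y(L)) = 0$ and the fact that a complete lift preserves the canonical tangent structure, $\Li_{Z^C}S = 0$, I get $\Li_Y\theta_L = 0$. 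In the same spirit, using $[\Delta,Y]=0$ and $[\partial/\partial s, Y]=0$ (both following from \eqref{lifts-vectors13} and the $G$-invariance of $\partial/\partial s$), one checks $Y(E_L) = Y(\Delta(L)-L) = 0$ and $Y(\partial L/\partial s) = 0$.

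The core step is a commutator argument based on the identity $[\Li_Y,\Li_\Gamma] = \Li_{[Y,\Gamma]} = \Li_Z$. Applying $\Li_Y$ to the first defining equation $\Li_\Gamma\eta_L = \frac{\partial L}{\partial s}\eta_L$ and using $\Li_Y\eta_L=0$ on the left and $Y(\partial L/\partial s)=0$, $\Li_Y\eta_L = 0$ on the right, I obtain $\Li_Z\eta_L = 0$. Applying $Y$ to the second equation $\eta_L(\Gamma) = -E_L$ and using $(\Li_Y\eta_L)(\Gamma)=0$ and $Y(E_L)=0$, I obtain $\eta_L(Z) = 0$. Now $\Li_Z\eta_L = i_Z\,d\eta_L + d(\eta_L(Z)) = i_Z\,d\eta_L$, so $\Li_Z\eta_L=0$ yields $i_Z\,d\eta_L=0$; combined with $\eta_L(Z)=0$ this gives $\flat_L(Z) = i_Z\,d\eta_L + \eta_L(Z)\eta_L = 0$. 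Since $L$ is regular, $\eta_L$ is a contact form and $\flat_L$ is an isomorphism, hence $Z = [Y,\Gamma] = 0$, as claimed.

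I expect the main obstacle to be the clean justification of $\Li_Y\eta_L = 0$, and within it the fact that the complete lift $\xi_Q^C$ commutes with the canonical tangent structure $S$ (equivalently $\Li_{Z^C}S=0$). This I would verify by evaluating $\Li_{Z^C}S$ on the frame $\{W^C,W^V,\partial/\partial s\}$ using the bracket relations \eqref{lifts-vectors13} and the defining action of $S$ on complete and vertical lifts. Once $\Li_Y\eta_L = 0$, $Y(E_L)=0$ and $Y(\partial L/\partial s)=0$ are in hand, the remainder is routine manipulation with the Lie-derivative commutator identity and the nondegeneracy of $\flat_L$.
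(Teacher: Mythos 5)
Your proof is correct, but it follows a genuinely different route from the one in the paper. The paper's argument is frame-based: it writes $\Gamma = v^\alpha Z_\alpha^C + \Gamma^\alpha Z_\alpha^V + L\,\partial/\partial s$ in an invariant local frame, observes that invariance of $\Gamma$ reduces to invariance of the coefficients $\Gamma^\alpha$, applies $\widetilde{E}_a^C$ to the Herglotz equations in the form \eqref{ecsHlocframe} to obtain $\widetilde{E}_a^C(\Gamma^\beta)\,Z_\beta^V\left(Z_\alpha^V(L)\right)=0$, and concludes from the nonsingularity of the Hessian matrix $\left(Z_\beta^V\left(Z_\alpha^V(L)\right)\right)$. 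You instead argue intrinsically: you verify that the tensorial data entering the defining equations ($\eta_L$, $E_L$, $\partial L/\partial s$) are all annihilated by $\Li_Y$ with $Y=\widetilde{E}_a^C$ --- the only nontrivial ingredient being $\Li_{Z^C}S=0$, which you correctly reduce to the bracket relations \eqref{lifts-vectors13} --- and then use $[\Li_Y,\Li_\Gamma]=\Li_{[Y,\Gamma]}$ to place $Z=[Y,\Gamma]$ in the kernel of $\flat_L$, so that regularity kills it. Each step of your argument checks out (in particular $Y(E_L)=0$ via $[\Delta,Z^C]=0$, and $\eta_L(Z)=0$ via $Y(\eta_L(\Gamma))=(\Li_Y\eta_L)(\Gamma)+\eta_L(Z)$). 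What the two approaches buy is different: the paper's computation explicitly identifies the invariant coefficient functions $\Gamma^\alpha$, which are reused verbatim in the reduced vector field $\breve\Gamma$ of Lemma \ref{ivfgam}, so it feeds directly into the reduction machinery; your argument is coordinate-free, requires no choice of frame or connection, and actually proves the stronger statement that the Herglotz vector field is equivariant under any vector field preserving $L$, $s$ and the tangent structure. Both invoke regularity in equivalent guises --- nonsingularity of the Hessian in a frame versus nondegeneracy of $\flat_L$.
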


\begin{proof}
From (\ref{sodelf}) 
we know that if $\{Z_{\alpha}\}$ is an invariant local frame on $Q$, the local expression of a Lagrangian SODE is
\begin{equation}\label{lagsode}
\Gamma = v^\alpha Z^C_\alpha + \Gamma^\alpha Z^V_\alpha + L \ds\frac{\partial}{\partial s} .
\end{equation}
We only need to check that $[\tilde{E}_a^C, \Gamma] = 0 $, where $\widetilde{E}_a^C$ is the fundamental vector field $(E_a)_{TQ \times \r}$ of the action on $TQ \times \r$.

From the invariance of the vector fields $Z_\alpha^C$, $Z_\alpha^V$ and $\frac{\partial}{\partial s}$, and the functions $L$ and $v^\alpha$ we have
\begin{equation}\label{Corchete}
[\widetilde{E}_a^C, \Gamma] = [\widetilde{E}_a^C, v^\alpha Z^C_\alpha + \Gamma^\alpha Z^V_\alpha + L \frac{\partial}{\partial s}] = \widetilde{E}_a^C(\Gamma^\alpha) Z_\alpha^V ,
\end{equation}
so $\Gamma$ is $G$-invariant if the functions $\Gamma^\alpha$ are invariants.

We apply the vector field $\widetilde{E}_a^C$ to both sides of the equations \eqref{ecsHlocframe}
	$$\Gamma\left(Z_{\alpha}^V\left(L\right)\right) - Z_{\alpha}^C\left(L\right) = \ds\frac{\partial L}{\partial s} \left(Z_{\alpha}^V\left(L\right)\right).$$

and we obtain
$$
\begin{array}{rcl}
\widetilde{E}_a^C \left(\Gamma\left(Z_{\alpha}^V\left(L\right)\right) - Z_{\alpha}^C\left(L\right)\right) & = & \widetilde{E}_a^C \left(\Gamma\left(Z_{\alpha}^V\left(L\right)\right)\right) - \widetilde{E}_a^C \left(Z_{\alpha}^C\left(L\right)\right) \\ \noalign{\medskip}
 & = &[\widetilde{E}_a^C, \Gamma] \left(Z_\alpha^V (L)\right) + \Gamma \left(\widetilde{E}_a^C\left(Z_\alpha^V (L)\right)\right) - [\widetilde{E}_a^C,Z_\alpha^C] (L) - Z_\alpha^C\left(\widetilde{E}_a^C(L)\right) \\ \noalign{\medskip}
 & = & [\widetilde{E}_a^C, \Gamma] \left(Z_\alpha^V (L)\right),
\end{array}
$$
because of $\widetilde{E}_a^C(L) = 0$ and the zero Lie brackets of $[\widetilde{E}_a^C,Z_\alpha^C]$ and $[\widetilde{E}_a^C,Z_\alpha^V]$.

Since 
$$
0=[\widetilde{E}_a^C,\ds\frac{\partial }{\partial s} ](L)=
\widetilde{E}_a^C(\ds\frac{\partial L}{\partial s} )-
\ds\frac{\partial L}{\partial s} (\widetilde{E}_a^C(L))=\widetilde{E}_a^C(\ds\frac{\partial L}{\partial s} )
$$
	 then 
$$\widetilde{E}_a^C\left(\ds\frac{\partial L}{\partial s} \left(Z_{\alpha}^V\left(L\right)\right)\right) = \widetilde{E}_a^C\left(\ds\frac{\partial L}{\partial s}\right) Z_\alpha^V(L) + \ds\frac{\partial L}{\partial s} \, \widetilde{E}_a^C \left(Z_\alpha^V(L)\right) = 0 .$$

So, by making use of expression (\ref{Corchete}), we have
	$$0 = [\widetilde{E}_a^C, \Gamma] \left(Z_\alpha^V (L)\right) = \widetilde{E}_a^C(\Gamma^\beta) Z_\beta^V\left(Z_\alpha^V (L)\right).$$

Given that the matrix $Z_\beta^V\left(Z_\alpha^V (L)\right)$ has maximal rank for a regular Lagrangian $L$, the result follows.
\end{proof}

 Now, since $\Gamma$ is invariant, it reduces to a  vector  field $\breve{\Gamma}$ on $(TQ \times \r)/G=TQ/G \times  \r$ defined by
$$\breve{\Gamma}\left([v_q],s_0\right) = \left(\pi_{TQ \times \r}\right)_* \left(v_q,s_0\right) \Gamma \left(v_q,s_0\right) ,$$
for $v_q \in TQ$, $s_0 \in \r$.

 The goal of the next few sections is to provide a coordinate expression of this  vector  field. To do so, we will need to invoke a principal connection on the bundle $Q\to Q/G$.

\subsection{Lagrange-Poincaré-Herglotz equations} \label{sec62}

Suppose we are given a principal connection on the principal bundle $\pi_Q: Q \to Q/G$. We choose the invariant frame $\{X_i,\widehat{E}_a,\frac{\partial}{\partial s}\}$ (see Section~\ref{se52n}), and we will write $(q^i,q^a,v^i,w^a,s)$ for the coordinates and the corresponding  quasi-velocities on $TQ \times \r$.

From Lemma \ref{lem51}  we know that
the frame
 $\{Z_\alpha^C,Z_\alpha^{V  },\frac{\partial}{\partial s}\} =
\{X_i^C,X_i^{V  },\widehat{E}_a^C, \widehat{E}_a^{V  },\frac{\partial}{\partial s}\}$
 consists only of invariant vector fields on $TQ  \times \r$. Also the coordinate functions $(q^i,v^i  , w^a,s)$ are $G$-invariant functions on $TQ \times \r$ and, therefore, they can be used as coordinates on
$(TQ \times \r)/G$.
In summary, we may say that the canonical projections are locally given by
\begin{equation}\label{project}
\begin{array}{ccc}
\pi_Q: Q & \to & Q/G \\ \noalign{\medskip}
(q^i,  q^a) & \mapsto & (q^i)
\end{array}
\qquad  \quad
\begin{array}{ccc}
\pi_{TQ \times \r} : TQ \times \r & \to & (TQ \times \r)/G \\ \noalign{\medskip}
(q^i,  q^a,v^i  ,w^a,s  ) & \mapsto & (q^i,v^i  ,w^a,s  ).
\end{array}
\end{equation}
\begin{lem}\label{lem28}
If we apply the vector fields $X_i^C, X_i^{V}, \widehat{E}_a^C,
\widehat{E}_a^{V}, \frac{\partial}{\partial s}$ to the invariant functions $q^i, v^i  ,
w^a ,s $ we obtain
 \[\begin{array}{llll}
  X_i^C(q^j)=\delta^j_i\, , &  X_i^C(v^j )=0 \, , &  X_i^C(w^a )=-\Upsilon^a_{ib}w^b +K^a_{ij}v^j  \, , & X_i^C(s) = 0 \, , \\ \noalign{\medskip}
 X_i^{V  }(q^j)=0 \, , & X_i^{V  }(v^j )=\delta^j_i  \, , & X_i^{V  }(w^a)=0\, , & X_i^V(s) = 0 \,, \\ \noalign{\medskip}
 \widehat{E}_a^C(q^j)=0\, , & \widehat{E}_a^C(v^j  )=0\, , & \widehat{E}_a^C(w^b )=\Upsilon^b_{ia}v^i  -C^b_{ac}w^c  \, , & \widehat{E}_a^C(s) = 0 \, , \\ \noalign{\medskip}
 \widehat{E}_a^{V   }(q^j)=0\, , & \widehat{E}_a^{V   }(v^j  )=0\, , & \widehat{E}_a^{V   }(w^b )=\delta^b_a  \, , & \widehat{E}_a^V(s) = 0 \, , \\ \noalign{\medskip}
 \ds\frac{\partial}{\partial s}(q^j) = 0 \, , & \ds\frac{\partial}{\partial s}(v^j) = 0 \, , & \ds\frac{\partial}{\partial s}(w^a) = 0 \, , & \ds\frac{\partial}{\partial s}(s) = 1 .
  \end{array}
  \]
\end{lem}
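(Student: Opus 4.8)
The plan is to verify the twenty entries of the table by splitting the coordinate functions into two groups and applying the transport formulas of Lemma~\ref{pztfle}. The columns $q^j$ and $s$ are elementary, while the quasi-velocity columns $v^j$ and $w^a$ are handled through their representation $v^j=\overrightarrow{\theta^j}$, $w^a=\overrightarrow{\theta^a}$ of Lemma~\ref{pztfle1}, where $\{\theta^i,\theta^a\}$ is the basis dual to $\{X_i,\widehat{E}_a\}$.

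First I would dispose of the $q^j$ and $s$ columns. On $TQ\times\r$ the coordinate $q^j$ is the vertical lift $(q^j)^V$, and by Remark~\ref{ext} neither the complete nor the vertical lift of a vector field on $Q$ carries a $\fpd{}{s}$ component, so every lift annihilates $s$ while $\fpd{}{s}(s)=1$; this settles the $s$ column at once. For $q^j$, the formulas $Z^C(f^V)=(Z(f))^V$ and $Z^V(f^V)=0$ from \eqref{proppztfle} reduce everything to the values $X_i(q^j)$ and $\widehat{E}_a(q^j)$ on $Q$. Since the $\widehat{E}_a$ are vertical for $\pi_Q$ they kill the base coordinates $q^j$, and from $X_i=\fpd{}{q^i}-\gamma^b_i\widehat{E}_b$ in \eqref{gamma} we get $X_i(q^j)=\delta^j_i$; this produces the whole $q^j$ column.

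Next come the quasi-velocity columns. The vertical-lift entries are immediate from $Z^V(\overrightarrow{\theta})=(\theta(Z))^V$ in \eqref{proppztfle}: duality gives $X_i^V(v^j)=(\theta^j(X_i))^V=\delta^j_i$, $X_i^V(w^a)=(\theta^a(X_i))^V=0$, $\widehat{E}_a^V(v^j)=0$ and $\widehat{E}_a^V(w^b)=\delta^b_a$, while the $\fpd{}{s}$ entries vanish because $\overrightarrow{\theta^\alpha}$ does not involve $s$. For the complete-lift entries I would use $Z^C(\overrightarrow{\theta})=\overrightarrow{\mathcal{L}_Z\theta}$ together with the dual-basis expansion $\overrightarrow{\mathcal{L}_Z\theta^\alpha}=(\mathcal{L}_Z\theta^\alpha)(X_j)\,v^j+(\mathcal{L}_Z\theta^\alpha)(\widehat{E}_b)\,w^b$, computing the coefficients through $(\mathcal{L}_Z\theta^\alpha)(Y)=Z(\theta^\alpha(Y))-\theta^\alpha([Z,Y])$. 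The first term always drops out because $\theta^\alpha(Y)$ is constant, so the computation reduces to reading the brackets off \eqref{e2} and pairing them with $\theta^\alpha$. For example $X_i^C(w^a)=\overrightarrow{\mathcal{L}_{X_i}\theta^a}$ with $(\mathcal{L}_{X_i}\theta^a)(X_j)=-\theta^a([X_i,X_j])=K^a_{ij}$ from bracket $(5)$ and $(\mathcal{L}_{X_i}\theta^a)(\widehat{E}_b)=-\theta^a([X_i,\widehat{E}_b])=-\Upsilon^a_{ib}$ from bracket $(4)$, giving $X_i^C(w^a)=-\Upsilon^a_{ib}w^b+K^a_{ij}v^j$. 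The entries $X_i^C(v^j)=0$ and $\widehat{E}_a^C(v^j)=0$ follow because $\theta^j(\widehat{E}_b)=0$ kills the only surviving contributions, while $\widehat{E}_a^C(w^b)=\Upsilon^b_{ia}v^i-C^b_{ac}w^c$ comes out of brackets $(4)$ and $(2)$ in the same way.

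The only step requiring genuine care is this last block. The point to keep straight is the asymmetry in \eqref{e2}: the brackets $[X_i,X_j]$ and $[X_i,\widehat{E}_a]$ land in the vertical directions $\widehat{E}$, so they are invisible to $\theta^j$ but visible to $\theta^a$. This is precisely why the $v^j$-entries vanish whereas the $w^a$-entries pick up the curvature $K^a_{ij}$ together with the connection and structure terms $\Upsilon^b_{ia}$, $C^b_{ac}$; the remaining difficulty is purely bookkeeping of index placement through the pairings $\theta^\alpha(\cdot)$.
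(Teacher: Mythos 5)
Your proof is correct and follows essentially the same route as the paper: representing the quasi-velocities as $v^\alpha=\overrightarrow{\theta^\alpha}$ (Lemma~\ref{pztfle1}), applying $Z^C(\overrightarrow{\theta})=\overrightarrow{\mathcal{L}_Z\theta}$ and $Z^V(\overrightarrow{\theta})=(\theta(Z))^V$ from Lemma~\ref{pztfle}, and reading the Lie derivatives of the dual coframe off the bracket relations \eqref{e2}. The paper only writes out the first row and says the rest follow similarly; you have simply carried out all the rows, with the same index bookkeeping.
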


\begin{proof} Let $\{\vartheta^j,\vartheta^a\}$ be the dual basis of $\{X_i,{\widehat E}_a\}$. From the bracket relations (\ref{e2}), we can see that ${\mathcal L}_{X_i}\vartheta^j=0$  and that ${\mathcal L}_{X_i}\vartheta^b=-\Upsilon^b_{ic}\vartheta^c+K^b_{ik}\vartheta^k$. Therefore,
$$
X_i^C(v^j)=X_i^C(\overrightarrow{\vartheta^j})= \overrightarrow{({\mathcal L}_{X_i}\vartheta^j)}=0
$$
and, from (\ref{proppztfle}), we deduce 
$$
X_i^C(w^b)=X_i^C( \overrightarrow{\vartheta^b})=
\overrightarrow{({\mathcal L}_{X_i}\vartheta^b)}=-\Upsilon^b_{ic}w^c+K^b_{ik}v^k.
$$
Since we also have
$$
X_i^C(q^j)=X_i(q^j)=\delta^j_i,
$$
then, the first row in the Lemma follows.
The other properties follow by similar arguments.
\end{proof}

\begin{lem}\label{projvf}
The  projections of the $G$-invariant vector fields
$X_i^C,X_i^{V  },\widehat{E}_a^C, \widehat{E}_a^{V  }, \frac{\partial}{\partial s}$ onto
$(TQ \times \r)/G$ are locally given, respectively, by
\[\begin{array}{ll}
 \breve{X}_i^C  =    \ds\frac{\partial}{\partial
q^i}+(K^a_{ij}v^j-\Upsilon^a_{ib}w^b)\ds\frac{\partial}
{\partial w^a  },   & \qquad
 \breve{X}_i^{V  } =  \ds\frac{\partial}{\partial v^i  },
   \\
\breve{{E}}_a^{C}  =
(\Upsilon^b_{ia}v^i  -C^b_{ac}w^c) \ds\frac{\partial}{\partial
w^b },    & \qquad
\breve{{E}}_a^{V  }  =  \ds\frac{\partial}{\partial w^a  } , \\
\breve{\ds\frac{\partial}{\partial s}} = \ds\frac{\partial}{\partial s} .
\end{array}
\]
where we are using the notations 
$$\breve{{E}}_a^{C}=\breve{\widehat{E}}_a^{C}, \quad \breve{{E}}_a^{V}=\breve{\widehat{E}}_a^{V}$$
\end{lem}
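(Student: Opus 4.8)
The plan is to invoke the basic reduction principle recalled in Section~\ref{sec4}, applied now to the principal bundle $\pi_{TQ\times\r}$ instead of $\pi_Q$. For a $G$-invariant vector field $W$ on $TQ\times\r$ and any invariant function $F$ with reduced function $f$ on the quotient, relation~(\ref{wf}) reads $W(F)=\breve{W}(f)\circ\pi_{TQ\times\r}$. Since $(q^i,v^i,w^a,s)$ are $G$-invariant functions that serve as coordinates on $(TQ\times\r)/G$, as recorded in~(\ref{project}), any vector field on the quotient is completely determined by its action on these coordinate functions. Hence I would write
$$
\breve{W}=\breve{W}(q^i)\,\derpar{}{q^i}+\breve{W}(v^i)\,\derpar{}{v^i}+\breve{W}(w^a)\,\derpar{}{w^a}+\breve{W}(s)\,\derpar{}{s},
$$
and by~(\ref{wf}) each component is precisely the reduced function of the invariant function obtained by applying $W$ to the corresponding coordinate function upstairs.

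First I would check that the correspondence is applicable to the five generating fields. Each of $X_i^C,X_i^{V},\widehat{E}_a^C,\widehat{E}_a^{V},\frac{\partial}{\partial s}$ is $G$-invariant by Lemma~\ref{lem51} (the frame $\{X_i,\widehat{E}_a\}$ on $Q$ is invariant, and $\partial/\partial s$ is invariant), so each descends to a well-defined vector field on the quotient. Moreover, applying an invariant field to an invariant function again produces an invariant function, so the quantities $W(q^i),W(v^i),W(w^a),W(s)$ descend to the quotient and furnish the components of $\breve{W}$ through~(\ref{wf}).

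The remaining step is purely to read off these components from Lemma~\ref{lem28}, which already tabulates the action of each of the five fields on each of $q^i,v^i,w^a,s$. For instance, for $W=X_i^C$ the lemma gives $X_i^C(q^j)=\delta_i^j$, $X_i^C(v^j)=0$, $X_i^C(w^a)=K^a_{ij}v^j-\Upsilon^a_{ib}w^b$ and $X_i^C(s)=0$, which assemble into $\breve{X}_i^C=\derpar{}{q^i}+(K^a_{ij}v^j-\Upsilon^a_{ib}w^b)\derpar{}{w^a}$. The other four projections follow identically by substituting the corresponding rows of Lemma~\ref{lem28}, yielding in particular $\breve{X}_i^{V}=\derpar{}{v^i}$, $\breve{E}_a^{C}=(\Upsilon^b_{ia}v^i-C^b_{ac}w^c)\derpar{}{w^b}$, $\breve{E}_a^{V}=\derpar{}{w^a}$ and $\breve{\partial/\partial s}=\derpar{}{s}$.

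I do not expect a genuine obstacle here: the geometric content lies entirely in the bracket computations already carried out in Lemma~\ref{lem28}, and the present statement is an immediate transcription through the reduction correspondence~(\ref{wf}). The only point requiring minor care is the verification that all five fields and all coordinate functions $q^i,v^i,w^a,s$ are $G$-invariant, so that~(\ref{wf}) is legitimately applicable; this is guaranteed by Lemma~\ref{lem51} together with the invariance of the quasi-velocities established earlier in this section.
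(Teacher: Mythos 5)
Your proposal is correct and follows exactly the route the paper intends: the paper states Lemma~\ref{projvf} without proof, treating it as an immediate consequence of relation~(\ref{wf}) applied to the invariant coordinate functions $(q^i,v^i,w^a,s)$ together with the table of values in Lemma~\ref{lem28}, which is precisely the argument you spell out. No gaps.
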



To finish, we will get the local expression of the reduced vector field $\breve{\Gamma}$ of a Herglotz SODE $\Gamma$.

Assume that an invariant Lagrangian $L\in\cinfty{TQ  \times \r}$ is given. Since the vector fields $$X_i^{C},\, X_i^{V  }, \, \widehat{E}_a^{C} ,\, \widehat{E}_a^{V  } ,\, \ds\frac{\partial }{\partial s}$$ are also invariant, then the functions $$X_i^{C}(L),\, X_i^{V  }(L), \, \widehat{E}_a^{C}(L) ,\, \widehat{E}_a^{V  }(L) ,\, \ds\frac{\partial L}{\partial s}$$ are invariant too. From the relation (\ref{wf}) between invariant vector fields and its reduced vector fields, we can therefore write
\begin{equation}\label{ss}
\begin{array}{ll}
X_i^{V  }(L)=\breve{X}^{V  }_i(l)\circ  \pi_{TQ \times \r}, & \qquad X_i^{C}(L)=\breve{X}^{C}_i(l)\circ \pi_{TQ \times \r}, \\ \noalign{\medskip}
\widehat{E}_a^{V  }(L)=\breve{E}_a^{V  }(l)\circ
  \pi_{TQ \times \r}, & \qquad \widehat{E}_a^{C}(L)=\breve{E}_a^{C}(l)\circ \pi_{TQ \times \r}, \\ \noalign{\medskip}
  \ds\frac{\partial L}{\partial s} = \ds\frac{\partial l}{\partial s} \circ \pi_{TQ \times \r} , &
\end{array}
\end{equation}
where $l:(TQ \times \r)/G \to \r$ is the {\sl reduced Lagrangian},  defined by $l\circ
\pi_{TQ \times \r} =L$.

From Proposition \ref{frameprop}, we know that a Herglotz SODE  $\Gamma$ satisfies, with respect to the frame $\{X_i,{\widehat E}_a, \frac{\partial}{\partial s}\}$, the equations
\begin{equation}\label{eqsodeinvfr}
\begin{array}{c}
\Gamma (X_i^{V }(L))- X_i^C(L)= \ds\frac{\partial L}{\partial s} X_i^V(L),  \\ \noalign{\medskip}
\Gamma (\widehat{E}_a ^{V }(L))-\widehat{E}_a ^C(L)=\ds\frac{\partial L}{\partial s} \widehat{E}_a^V(L).
\end{array}
\end{equation}
By making use of the fact that  $\Gamma  $ is an invariant vector field on $TQ \times \r$, it follows from (\ref{wf}), (\ref{ss}) and (\ref{eqsodeinvfr})  that the reduced vector field ${\breve\Gamma}  $ satisfy
\[ \begin{array}{c}
\breve{\Gamma} (\breve{X}_i^{V }(l))- \breve{X}_i^C(l)= \ds\frac{\partial l}{\partial s} \breve{X}_i^V(l),  \\ \noalign{\medskip}
\breve{\Gamma} (\breve{E}_a ^{V }(l))-\breve{E}_a^C(l)=\ds\frac{\partial l}{\partial s} \breve{E}_a^V(l).
\end{array} \]
 on $(TQ\times \r)/G$. Taking into account the result in Lemma
\ref{projvf} we can rewrite these equations as
\begin{equation}\label{eqelred}
\begin{array}{l}
\breve{\Gamma} \left(\ds\frac{\partial l}{\partial v^i}\right) - \ds\frac{\partial l}{\partial q^i}  = \left(K^a_{ij}v^j-\Upsilon^a_{ib}w^b\right) \ds\frac{\partial l} {\partial w^a } + \ds\frac{\partial l}{\partial s} \ds\frac{\partial l}{\partial v^i},  \\ \noalign{\medskip}
\breve{\Gamma} \left(\ds\frac{\partial l}{\partial w^a}\right) = \left(\Upsilon^b_{ia}v^i  -C^b_{ac}w^c\right) \ds\frac{\partial l}{\partial w^b} + \ds\frac{\partial l}{\partial s} \ds\frac{\partial l}{\partial w^a } .
\end{array}
\end{equation}

From (\ref{lagsode}) and in terms of the frame $\{X_i,{\widehat E}_a, \frac{\partial}{\partial s}\}$, the Herglotz SODE can be written as
\begin{equation}\label{gamlocl}
\Gamma = v^i X_i^C + w^a \widehat{E}_a^C + \Gamma^i X^V_i + \Gamma^a \widehat{E}^V_a + L \ds\frac{\partial}{\partial s} .
\end{equation}
We have already established in the proof of Prop~\ref{Gammainv} that the functions $\Gamma^i $ and $\Gamma^a $
are  invariant, so they may be considered as functions on $(TQ)/G \times \r$.

From Lemma~\ref{projvf}, and since $C^c_{ab}w^aw^b=0$ and  $K^a_{ij}v^iv^j=0$, we see that:

\begin{lem}\label{ivfgam}

  The reduced vector field $\breve{\Gamma}   $ on $(TQ/G )\times \r$ of
  a Herglotz SODE $\Gamma  $ is given by
$$
\breve{\Gamma} = v^i \, \ds\frac{\partial }{\partial q^i} + \Gamma^j \,\ds\frac{\partial}{\partial v^j} +\Gamma^a \,\ds\frac{\partial}{\partial w^a}  + l \, \ds\frac{\partial }{\partial s} .
$$
\end{lem}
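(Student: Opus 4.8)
The plan is to exploit the invariance of $\Gamma$ proved in Proposition~\ref{Gammainv} together with the fact that reduction, i.e.\ the push-forward $T\pi_{TQ\times\r}$, is a fibrewise-linear operation compatible with the decomposition of $\Gamma$ in the invariant frame. First I would start from the local expression \eqref{gamlocl},
$$\Gamma = v^i X_i^C + w^a \widehat{E}_a^C + \Gamma^i X_i^V + \Gamma^a \widehat{E}_a^V + L \frac{\partial}{\partial s},$$
and record that every ingredient here is $G$-invariant: the frame fields $X_i^C, X_i^V, \widehat{E}_a^C, \widehat{E}_a^V, \partial/\partial s$ by Lemma~\ref{lem51}, and the coefficient functions $v^i, w^a, \Gamma^i, \Gamma^a, L$ (the quasi-velocities were shown invariant in Section~\ref{sec5}, the components $\Gamma^i, \Gamma^a$ in the proof of Proposition~\ref{Gammainv}, and $L$ by hypothesis).

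The key structural step is that the reduction of a sum of products (invariant function)\,$\times$\,(invariant vector field) is obtained termwise, each factor being replaced by its reduced counterpart: if $f$ and $W$ are invariant then $\breve{(fW)} = \breve f\,\breve W$, which follows directly from the defining relation \eqref{defred} and from \eqref{wf}, since $T\pi$ is linear on each tangent space. Applying this to \eqref{gamlocl} and substituting the explicit projections from Lemma~\ref{projvf} gives
$$\breve\Gamma = v^i\,\breve{X}_i^{C} + w^a\,\breve{E}_a^{C} + \Gamma^i\,\breve{X}_i^{V} + \Gamma^a\,\breve{E}_a^{V} + l\,\frac{\partial}{\partial s},$$
where $L$ has been replaced by the reduced Lagrangian $l$ as in \eqref{ss}. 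The terms $\Gamma^i\,\breve{X}_i^{V} = \Gamma^i\,\partial/\partial v^i$, $\Gamma^a\,\breve{E}_a^{V} = \Gamma^a\,\partial/\partial w^a$ and $l\,\partial/\partial s$ are already in final form, while $v^i\,\breve{X}_i^{C}$ supplies the wanted $v^i\,\partial/\partial q^i$.

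The hard part, and really the only computation, is collecting the coefficient of $\partial/\partial w^b$ coming from the two complete-lift terms $v^i\,\breve{X}_i^{C}$ and $w^a\,\breve{E}_a^{C}$. Using Lemma~\ref{projvf}, this coefficient is
$$v^i\bigl(K^b_{ij}v^j - \Upsilon^b_{ia}w^a\bigr) + w^a\bigl(\Upsilon^b_{ia}v^i - C^b_{ac}w^c\bigr) = K^b_{ij}v^iv^j - C^b_{ac}w^aw^c,$$
since the two $\Upsilon$-contributions are identical up to sign and cancel. I would then invoke the antisymmetry of the curvature components $K^b_{ij} = -K^b_{ji}$ and of the structure constants $C^b_{ac} = -C^b_{ca}$ to conclude $K^b_{ij}v^iv^j = 0$ and $C^b_{ac}w^aw^c = 0$, exactly the two vanishings flagged before the statement. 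Hence no $\partial/\partial w^b$ term survives from the complete lifts, and collecting what remains yields precisely
$$\breve\Gamma = v^i\,\frac{\partial}{\partial q^i} + \Gamma^j\,\frac{\partial}{\partial v^j} + \Gamma^a\,\frac{\partial}{\partial w^a} + l\,\frac{\partial}{\partial s}.$$
The only point requiring care is the index bookkeeping in the cross-terms; everything else is immediate from Lemma~\ref{projvf}.
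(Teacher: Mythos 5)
Your proof is correct and follows essentially the same route as the paper, which simply cites Lemma~\ref{projvf} together with the vanishings $C^b_{ac}w^aw^c=0$ and $K^b_{ij}v^iv^j=0$; you have merely expanded that one-line justification, and your explicit observation that the two $\Upsilon$-contributions cancel is a detail the paper leaves implicit but is indeed needed. The index bookkeeping in your collection of the $\partial/\partial w^b$ coefficient checks out.
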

Now, if  $$t \longrightarrow\breve{\phi}(t) = \left(q^i(t) , v^i(t) , w^a(t) ,s(t)\right)
\in (TQ/G )\times \r$$ is an integral curve of $\breve{\Gamma}$ then it satisfies,
in view of relations (\ref{eqelred}), the \textit{Lagrange-Poincaré-Herglotz equations},
\begin{equation}\label{l-eq}
\begin{array}{ll}
\ds\frac{dq^i  }{dt} = v^i ,\\ \noalign{\medskip}
	\ds\frac{d}{dt} \left(\ds\frac{\partial l}{\partial v^i} \circ \breve{\phi}\right) - \ds\frac{\partial l}{\partial q^i} \circ \breve{\phi}  = \left(K^a_{ik} v^k  -\Upsilon^a_{ib} w^b\right) \ds\frac{\partial l} {\partial w^a } \circ \breve{\phi} + \ds\frac{\partial l}{\partial s} \ds\frac{\partial l}{\partial v^i} \circ \breve{\phi}, \\ \noalign{\medskip}  
\ds\frac{d}{dt} \left(\ds\frac{\partial l}{\partial w^a} \circ \breve{\phi} \right) = \left(\Upsilon^b_{ia}v^i -C_{ac}^b w^c\right) \ds\frac{\partial l}{\partial w^b} \circ \breve{\phi} + \ds\frac{\partial l}{\partial s} \ds\frac{\partial l}{\partial w^a } \circ \breve{\phi} .
 \\ \noalign{\medskip}  
 \ds\frac{ds}{dt}=l
\end{array}
\end{equation}

\begin{remark} 
In the case $Q=G$, if we choose the connection to be trivial, equations \eqref{l-eq} are now
    
$$\begin{array}{ccc}
\ds\frac{d}{dt} \left(\ds\frac{\partial l}{\partial w^a} \circ \breve{\phi} \right) =   -C_{ac}^b \phi^c  \ds\frac{\partial l}{\partial w^b} \circ \breve{\phi} + \ds\frac{\partial l}{\partial s} \ds\frac{\partial l}{\partial w^a } \circ \breve{\phi} .
 \\ \noalign{\medskip}  
 \ds\frac{ds}{dt}=l\end{array}
$$ 
    which are the Euler-Poicaré-Herglotz equations considered in  \cite{EPH}, \cite{A1}, \cite{A2}.

 \end{remark}

\subsection{Illustrative example: Dissipative Wong's equations}

Let $(Q, g_Q)$ be a Riemannian manifold on which a group $G$ acts freely and properly to the left as isometries, and we make the further stipulation that the vertical part of the metric (that is, its restriction to the fibres of $\pi : Q \to Q/G$) comes from a bi-invariant metric $\kappa
$ on $G$. Suposse that $\mathfrak{g}$ is the Lie algebra of $G$, $\omega : TQ \to \mathfrak{g}$ is a principal connection on $Q$ and $g$ is the reduced metric on $Q/G$. Then, the metric $g_Q$ has the following expression
$$g_Q (v_q, u_q) = \kappa(e) (\omega(v_q), \omega(u_q)) + g (\pi(q)) (\pi_*(q) (v_q), \pi_*(q) (u_q)) ,$$
for $v_q$, $u_q \in T_qQ$ and $\kappa(e)$ being the metric $\kappa$ at the identity element $e$ in $G$.

The fact that the symmetry group acts as isometries means that the fundamental vector fields $\widetilde{E_a}$ are Killing vector fields, $\Li_{\widetilde{E_a}} \, g_Q = 0$. It follows that the components of $g_Q$ with respect to the members of an invariant basis $\{X_i,\widehat{E_a}\}$ are themselves invariant. For this metric, we have $g_Q (X_i, \widehat{E_a}) = 0$ and we will set $g_Q(X_i,X_j) = g_{ij}$ and $g_Q(\widehat{E_a}, \widehat{E_b}) = h_{ab}$.
Since both $g_{ij}$ and $h_{ab}$ are $G$-invariant functions, they pass to the quotient; in particular $g_{ij}$ are the components with respect to the coordinate fields of the metric $g$ on $Q/G$.

The further assumption about the vertical part of the metric has the following implications.
It means in the first place that $\Li_{\widehat{E_c}} g_Q(\widehat{E_a}, \widehat{E_b}) = 0$ (since $\Li_{\widetilde{E_c}} g_Q(\widehat{E_a}, \widehat{E_b}) = 0$), and secondly that the $h_{ab}$ must be independent of the coordinates $q^i$ on $Q/G$, which is to say that they must be constants. From the first condition, taking into account the bracket relations $[\widehat{E_a}, \widehat{E_b}] = C_{ab}^c \widehat{E_c}$, we easily find that $h_{ab}$ must satisfy $h_{bd} C_{ac}^b + h_{bc} C_{ad}^b = 0$. Also, if we set
$$X_i = \ds\frac{\partial}{\partial q^i} - \gamma_i^a \widehat{E_a}$$
for some $G$-invariant coefficients $\gamma_i^a$, we get $\Upsilon_{ia}^b = \gamma_i^c C_{ac}^b$, and therefore $h_{ac} \Upsilon_{ib}^c + h_{bc} \Upsilon_{ia}^c = 0$.

We consider the kinetic contact energy $L: TQ \times \r \to \r $ associated with $g_Q$, which is given by the Lagrangian
$$L(v_q,s) = \ds\frac{1}{2} \left( \kappa(e)\left(\omega(v_q), \omega(v_q)\right) + g (\pi(q)) \left(\pi_* (q) (v_q),  \pi_* (q) (v_q)\right)\right) - \gamma s,$$
for $v_q \in T_qQ$ and $s \in \r$.
In local coordinates $(q^i,q^a, v^i, w^a, s)$, we have
$$L= \ds\frac{1}{2} \left(g_{ij} v^i v^j +  h_{ab} w^a w^b \right)- \gamma s .$$

It is clear that $L$ is hyperregular and $G$-invariant.

On the other hand, since the Riemannian metric $g_Q$ is also $G$-invariant, it induces a fiber metric $g_{TQ/G}$ on the quotient vector bundle $TQ/G \to Q/G$. The reduced contact Lagrangian $l : TQ/G \times \r \to \r$ is just the kinetic contact energy of the fiber metric $g_{TQ/G}$, that is,
$$l([v_q],s) = \ds\frac{1}{2}\left(\kappa(e)\left(\omega(v_q), \omega(v_q)\right) + g(\pi(q)) \left(\pi_* (q) (v_q),  \pi_* (q) (v_q)\right)\right) -\gamma s ,$$
which in local coordinates $(q^i,v^i,w^a,s)$ is given by
$$l = \ds\frac{1}{2} \left(g_{ij} v^i v^j +  h_{ab} w^a w^b \right)- \gamma s.$$

The Lagrange-Poincaré-Herglotz equations \eqref{l-eq} for the contact Lagrangian function $l$ are
\begin{equation}\label{LPHeq}
\begin{array}{rl}
   \ds\frac{d}{dt}\left(g_{ij} v^j \right) - \ds\frac{1}{2} \ds\frac{\partial g_{jk}}{\partial q^i} v^j v^k =  & \left( B_{ij}^a v^j - \Upsilon_{ib}^a w^b \right) h_{ac} w^c - \gamma g_{ij} v^j \,, \\ \noalign{\medskip} 
   \ds\frac{d}{dt} \left(h_{ab} w^b \right) =  & \left(\Upsilon
_{ia}^c v^i - C_{ad}^c w^d \right) h_{ce} w^e - \gamma h_{ab} w^b \,, \\ \noalign{\medskip}
\ds\frac{d s}{dt} = & l \,,
\end{array}
\end{equation}
where $B_{ij}^a$ are the components of the curvature of the principal connection $\omega$.

Since $\Upsilon_{ib}^a h_{ac}$ is skew-symmetric in $b$ and $c$, and $C_{ad}^c h_{ce}$ is skew-symmetric in $d$ and $e$, the terms $\Upsilon_{ib}^a w^b  h_{ac} w^c$ and $C_{ad}^c w^d h_{ce} w^e$ in the first and second equations vanish identically. Now, let $\Gamma_{jk}^i$ be the connection coefficients of the Levi-Civita connection of the reduced metric $g_{ij}$, then we may write the equations in the form
\begin{equation}
    \begin{array}{rl}
      g_{ij} \left(\Ddot{q}^j + \Gamma_{kl}^j \dot{q}^k \dot{q}^l \right) =   & h_{ac} B_{ij}^a \dot{q}^j w^c - \gamma g_{ij} \dot{q}^j \\ \noalign{\medskip}
      h_{ab} \left(\dot{w}^b + \Upsilon_{id}^b \dot{q}^i w^d + \gamma w^b \right) =    & 0
    \end{array}
\end{equation}
using the skew-symmetry of $\Upsilon_{ia}^b h_{bd}$ again in the second equation. Given that $B_{ij}^a$ is of course skew-symmetric in its lower indices, these equations are equivalent to
\begin{equation}\label{disipWongeq}
 \begin{array}{rl}
    \Ddot{q}^m + \Gamma_{kl}^m \dot{q}^k \dot{q}^l = & g^{im} h_{bc} B_{ji}^c \dot{q}^j w^b - \gamma \dot{q}^m \\ \noalign{\medskip}
     \dot{w}^a + \Upsilon_{ib}^a \dot{q}^i w^b + \gamma w^a = & 0
 \end{array}   
\end{equation}
 These are dissipative Wong's equations (see \cite{montgomery}) for the usual case).

Let us now take $Q$ to be $E^3 \times S$ with coordinates $(x^i,\theta)$. Let $A_i$ be the components of a covector field on $E^3$, and define a metric $g_Q$ on $Q$, the Kaluza-Klein metric, by
\begin{equation}
    g_Q = \delta_{ij} dx^i \otimes dx^j + (A_i dx^i + d\theta)^2 ,
\end{equation}
where $(\delta_{ij})$ is the Euclidean metric. The Kaluza-Klein admits the Killing field $E= \frac{\partial}{\partial \theta}$.
The vector fields $X_i = \frac{\partial}{\partial x^i} - A_i \frac{\partial}{\partial \theta}$ are orthogonal to $E$ (that is, $g_Q(X_i,E) = 0$), and invariant. Moreover, $g_{ij} = g_Q (X_i,X_j) = \delta_{ij}$, while $g_Q(E,E) = 1$. Finally,
$$[X_i,X_j] = \left(\ds\frac{\partial A_i}{\partial x^j} - \ds\frac{\partial A_j}{\partial x^i}\right) \ds\frac{\partial}{\partial \theta} .$$

Putting these values into the reduced equations above \eqref{disipWongeq} we obtain
\begin{equation}
    \Ddot{x}^m  =  \left(\ds\frac{\partial A_i}{\partial x^j} - \ds\frac{\partial A_j}{\partial x^i}\right) \dot{x}^j w - \gamma \dot{x}^m \, , \qquad \dot{w} +  \gamma w =  0 .
\end{equation}

\section{The principal connection on $ TQ \times \r\to   TQ/G \times \r$} \label{sec6}
In order for the reconstruction method to work, we need a     connection on $\pi_{   TQ \times \r } :TQ \times \r\to (TQ \times \r)/G\equiv TQ/G \times \r$. We start by defining the horizontal distribution, constructed from a given Lagrangian function satisfying a certain regularity assumption.

Consider the    contact  form $\eta_L$ associated to a Lagrangian $L$. We define linear maps
\[ \begin{array}{ccccl}
g_{(v_q,s) }   & : & T_qQ \times    T_qQ & \to & \r
\\ \noalign{\medskip}
  &  &    (u_q,w_q) & \to &    g_{(v_q,s) }  (u_q,w_q)= d\eta_L(v_q,s) (X^C(v_q,s) ,Y^{V}(v_q,s) ),
\end{array}\]
where $X,Y$ are vector fields on $Q$ for which $X(q)=u_q$ and $Y(q)=w_q$.

In the natural coordinates  $(q^\alpha  ,u^\alpha , s    )$ on $TQ\times \r  $, the coordinate expression of $g_{(v_q,s) }  $ is
\begin{equation}
 \label{localG}
g_{(v_q,s) }   =\displaystyle \frac{\partial^2 L}{\partial u^\alpha      \partial u^\beta}\Big\vert_{ (v_q,s) } dq^\alpha  (v_q,s)  \otimes dq^\beta(v_q,s) .
\end{equation}

 In what follows, we will use the following notations for the coefficients with respect to the basis $\{X_i,{\tilde E}_a\}$ of vector fields on $Q$:
$$\begin{array}{lcl}
   g  _{ia}(v_q,s)  &=& g_{(v_q,s) }   (X_i(q),  {\widetilde E}_a(q)) =g_{(v_q,s) }   ( {\widetilde E}_a(q),X_i(q)) =g  _{ai}(v_q,s) , \\
\noalign{\medskip}
g  _{ij}(v_q,s)  &=& g_{(v_q,s) }   (X_i(q),X_j(q)),   \quad
g  _{ab}(v_q,s)  =g_{(v_q,s) }   (  {\widetilde E}_a(q),{\widetilde E}_b(q)),
\end{array}$$
from  the local expressions \eqref{K},\eqref{Ehat}, \eqref{gamma} and \eqref{localG} we obtain 
\begin{equation}\label{hess}
g_{ij} = {X}_i^{V   } ({X}_j^{V}(L)),\quad g_{ib} = {X}_i^{V   } ({\widetilde E}_b^{V}(L)), \quad g_{ab} =   {\widetilde E}_a^{V   } ({\widetilde E}_b^{V}(L)).
\end{equation}

 \begin{definition}
    A Lagrangian $L$ is $G$-regular if the matrix $(g_{ab})$ is non-singular.
\end{definition}

Let us observe that 
$$
g_{ab}=  {\widetilde E}_a^{V}
({\widetilde E}_b^{V}(L))=K^c_a K^d_b\left( \frac{\partial^2 L}{\partial u^c \partial u^d } \right)
$$

and thus $L$ is $G$-regular if the matrix $\left( \displaystyle\frac{\partial^2 L}{\partial u^c \partial u^d } \right)$ is non-singular.

We define now the horizontal subspace of the  connection on
$\pi_{   TQ \times \r } :TQ \times \r\to (TQ \times \r)/G\equiv TQ/G \times \r$ as follows.

\begin{definition}
 An element  $W_{(v_q,s)}   \in   T (TQ\times \r  )$  is said to be horizontal   if it satisfies
\[
g_{(v_q,s) }\left(    (\tau_{Q})_*(v_q,s)W_{(v_q,s)}, \xi_Q(q)\right) =0,
\]
for all   $ \xi   \in {\mathfrak g}  $, and where $(\tau_{Q})_*(v_q,s):T_{(v_q,s)}(TQ\times \r)\to T_qQ$.
\end{definition}

Of course, this defines a splitting only if we suppose that $g$ is non-singular when restricted
to the set of fundamental vector fields; this will be the case in particular if $g$ is everywhere
positive definite.

Each element $W_{(v_q,s)}$ can be written with respect to the lifted frame 
$\{X_i^C,  {\widetilde E}^C_a,
X_i^V,  {\widetilde E}^V_a,
\frac{\partial}{\partial s}\}$
   as follows 
\begin{equation}\label{general}
W_{(v_q,s)} = W^i    X^C_i (v_q,s)  + W^a      {\widetilde E}^C_a (v_q,s)  + Z^i     X^{V}_i (v_q,s)  + Z^a       {\widetilde E}^{V}_a (v_q,s) +f\, \derpar{}{s}\Big\vert_{(v_q,s)}\, .
\end{equation}
and, then, the condition for $W_{(v_q,s)} $ to be horizontal becomes
\begin{equation}\label{general0}
g_{ib}W^i    + g_{ab}W^a     =0.
\end{equation}
Since we assume that the Lagrangian is regular, we can conclude that a horizontal vector $W_{(v_q,s)}$ takes the form
\[
W_{(v_q,s)} = W^i     H_i  (v_q,s)  + Z^i     X^{V}_i (v_q,s)  + Z^a       {\widetilde E}^{V}_a (v_q,s)+f\, \derpar{}{s}\Big\vert_{(v_q,s)}
 ,
\]
where 
$$    H_i  =      X^C_i -  B^a_{ i} {\widetilde E}^C_a$$
with ${  B}^{  a}_{  i} = g^{b a}  g_{ib}$.

Then we have  that every element of $  T ( TQ\times \r)$ can be written in a `horizontal part' ${\bf H}W$ and a `vertical part' ${\bf V}W$. Indeed,  from (\ref{general}),
 we write as follows
  $$W _{(v_q,s)} = hor\, W    + vert \, W   $$ with
$$ \begin{array}{ccl}
    {\bf H}W=hor \, W & = &    W^i   \,  H_i  (v_q,s)    + Z^i  \,   X^{V}_i (v_q,s)  + Z^a    \,   {\widetilde E}^{V}_a (v_q,s)+f\, \derpar{}{s}\Big\vert_{(v_q,s)}
    \\ \noalign{\medskip}
   {\bf V}W= vert \, W  & = & (W^a       + W^i  { B}^{  a}_{  i} ) \,{\widetilde E}^C_a (v_q,s) 
\end{array}$$
      
 The corresponding connection map $\Omega  :  
  T(TQ\times \r    ) \to {\mathfrak g}  $
  is the one that has the property that
\[
\Omega   (hor \, W) =0, \qquad \Omega   (\xi _{   TQ \times \r}(v_q,s)  ) = \xi .
\]
and thus we obtain the following result.

\begin{prop}\label{omegavectors}
\begin{equation}\label{00lift}
    \Omega( X_i^C)=B^a_i\,  E_a  ,\quad
 \Omega( \hat{ E}^{C}_a)=
 A^b_a\,E_b,\quad 
 \Omega( X_i^V)=\Omega(\widetilde{ E}^{V}_a )=0, \quad 
    \Omega(\widetilde{ E}^{C}_a)=E_a, \quad\Omega(  \derpar{}{s})=0. 
\end{equation}
\end{prop}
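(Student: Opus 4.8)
The plan is to read off all five identities directly from the two defining properties of the connection map, namely $\Omega(\mathbf{H}W)=0$ on horizontal vectors and $\Omega(\xi_{TQ\times\r})=\xi$ on fundamental vector fields, after recalling that on $TQ\times\r$ the fundamental vector field of $\xi\in\mathfrak{g}$ is precisely $\xi_Q^C$, so that the basis fundamental fields are the complete lifts $\widetilde{E}_a^C=(E_a)_{TQ\times\r}$.

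First I would dispose of the horizontal generators. By construction the horizontal subspace is spanned by $\{H_i,X_i^V,\widetilde{E}_a^V,\partial/\partial s\}$, so each of these is annihilated by $\Omega$; in particular $\Omega(X_i^V)=\Omega(\widetilde{E}_a^V)=\Omega(\partial/\partial s)=0$. For the vertical direction, since $\widetilde{E}_a^C=(E_a)_{TQ\times\r}$ is the fundamental vector field associated to the basis element $E_a$, the second defining property gives $\Omega(\widetilde{E}_a^C)=E_a$ at once.

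Next, for $X_i^C$ I would use the horizontal vector field $H_i=X_i^C-B_i^a\widetilde{E}_a^C$ introduced above: rewriting $X_i^C=H_i+B_i^a\widetilde{E}_a^C$ and invoking both $\Omega(H_i)=0$ and the $C^\infty$-linearity of the $\mathfrak{g}$-valued $1$-form $\Omega$ in its argument, one gets $\Omega(X_i^C)=B_i^a\,\Omega(\widetilde{E}_a^C)=B_i^a\,E_a$.

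The only computation requiring care is $\Omega(\widehat{E}_a^C)$. Here I would start from the relation $\widehat{E}_a=A_a^b\,\widetilde{E}_b$ of (\ref{A}) and apply the complete-lift identity (\ref{lifts-prod-funct-vector}), $(fX)^C=f^V X^C+f^C X^V$, with $f=A_a^b$ and $X=\widetilde{E}_b$, which yields $\widehat{E}_a^C=(A_a^b)^V\,\widetilde{E}_b^C+(A_a^b)^C\,\widetilde{E}_b^V$. Applying $\Omega$ and using $C^\infty$-linearity once more, the second term drops out because $\widetilde{E}_b^V$ is horizontal, while the first contributes $(A_a^b)^V\,\Omega(\widetilde{E}_b^C)=(A_a^b)^V\,E_b$. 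The main (mild) obstacle is simply to recognise that the coefficient $(A_a^b)^V$ is the vertical lift to $TQ\times\r$ of the fibre function $A_a^b(g)$ on $Q$, so that under the identification of basic functions with their values it reads $A_a^b\,E_b$, as claimed. Once these identifications are in place, the five equations of (\ref{00lift}) follow, completing the proof.
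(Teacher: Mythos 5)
Your proposal is correct and follows essentially the same route as the paper: the horizontal generators and $\widetilde{E}_a^C$ are handled directly from the two defining properties of $\Omega$, the identity for $X_i^C$ is extracted from $\Omega(H_i)=0$ with $H_i=X_i^C-B_i^a\widetilde{E}_a^C$, and $\Omega(\widehat{E}_a^C)$ is computed via the product rule $(fX)^C=f^V X^C+f^C X^V$ applied to $\widehat{E}_a=A_a^b\widetilde{E}_b$. If anything, your version is cleaner than the paper's, which has minor index and coefficient typos in both displayed computations; your explicit appeal to the $C^\infty$-linearity of $\Omega$ and the identification of $(A_a^b)^V$ with $A_a^b$ makes the same argument more precise.
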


\begin{proof}
    
We only need to prove the following

$$0=\Omega( H_i)=\Omega(  X^C_i -  B^a_{i} \widetilde {E}^C_a)= \Omega(  X^C_i)-\Omega(  \widetilde {E}^C_a)= \Omega(  X^C_i)-B^a_i\,  E_a$$

$$ \Omega( \hat E_a^C)= (A^b_a \widetilde{E}_a)^C=
 \Omega \left((A^b_a)^C \widetilde{E}_a^V+
 A^b_a\, \widetilde{E}_a^C\right)
 =A^b_a \, {E}_b$$
\end{proof}


\begin{prop} Let $L$ be a $G$-regular and $G$-invariant Lagrangian. Then  
 $\Omega$ is  a principal    connection on the principal bundle $\pi_{   TQ \times \r } :TQ \times \r\to (TQ \times \r)/G\equiv TQ/G \times \r$.
\end{prop}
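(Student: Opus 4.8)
The plan is to verify the two defining properties of a principal connection recalled in Section~2.1: that $\Omega$ reproduces the Lie algebra on the fundamental vector fields, and that it is $Ad$-equivariant, $(\Phi^{TQ\times\r}_g)^*\Omega=Ad(g)\,\Omega$. The first property is already contained in Proposition~\ref{omegavectors}, since the fundamental vector field of $E_a\in\g$ for the action $\Phi^{TQ\times\r}$ is $\widetilde{E}_a^C$ and $\Omega(\widetilde{E}_a^C)=E_a$; moreover the $G$-regularity hypothesis is exactly what guarantees, via the nondegeneracy of $(g_{ab})$ used in \eqref{general0}, that the vertical projection defining $\Omega$ is well posed. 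So the substance of the proof is the equivariance, and the cleanest route is to show that the horizontal distribution $\mathbf{H}$ is itself $G$-invariant and then invoke the standard fact that an invariant horizontal complement of the vertical bundle yields an $Ad$-equivariant connection form.

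First I would establish that the bilinear form $g_{(v_q,s)}$ is $G$-invariant as a tensor, i.e.\ $g_{\Phi_g(v_q,s)}\big((\Phi_g)_*u_q,(\Phi_g)_*w_q\big)=g_{(v_q,s)}(u_q,w_q)$. This rests on two inputs. The first is that $\eta_L$, and hence $d\eta_L$, is $G$-invariant: indeed $\eta_L=ds-dL\circ S$ is built from the $G$-invariant function $L$, the coordinate $s$ (fixed by the action), and the canonical tangent structure $S$, which is preserved by any tangent-lifted diffeomorphism and therefore by $\Phi^{TQ\times\r}_g$; a short computation then gives $(\Phi^{TQ\times\r}_g)^*\eta_L=\eta_L$. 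The second input is the naturality of the lifts under $\Phi_g$: for $X\in\vectorfields{Q}$ one has $((\Phi_g)_*X)^C=(\Phi^{TQ\times\r}_g)_*(X^C)$ and likewise for the vertical lift. Choosing extensions $X,Y$ of $u_q,w_q$ and using the definition $g_{(v_q,s)}(u_q,w_q)=d\eta_L(X^C,Y^V)$, the invariance of $g$ follows from $(\Phi^{TQ\times\r}_g)^*d\eta_L=d\eta_L$ by a direct pullback computation, the key point being that naturality makes the argument independent of the chosen extensions.

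Next I would deduce the $G$-invariance of $\mathbf{H}$. A vector $W$ at $(v_q,s)$ is horizontal iff $g_{(v_q,s)}\big((\tau_Q)_*W,\xi_Q(q)\big)=0$ for all $\xi\in\g$. Using $\tau_Q\circ\Phi^{TQ\times\r}_g=\Phi_g\circ\tau_Q$ one gets $(\tau_Q)_*(\Phi_g)_*W=(\Phi_g)_*(\tau_Q)_*W$, and using the transformation law of fundamental vector fields $\xi_Q(gq)=(\Phi_g)_*\big((Ad_{g^{-1}}\xi)_Q(q)\big)$ together with the tensorial invariance of $g$ just established, one finds
\[
g_{\Phi_g(v_q,s)}\big((\tau_Q)_*(\Phi_g)_*W,\xi_Q(gq)\big)=g_{(v_q,s)}\big((\tau_Q)_*W,(Ad_{g^{-1}}\xi)_Q(q)\big)=0,
\]
so $(\Phi_g)_*W$ is horizontal at $\Phi_g(v_q,s)$; this proves $(\Phi_g)_*\mathbf{H}=\mathbf{H}$.

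Finally, equivariance follows formally. Writing any $W$ as $W=hor\,W+\xi_{TQ\times\r}(v_q,s)$ with $\xi=\Omega(W)\in\g$, the invariance of $\mathbf{H}$ kills the horizontal part after pushforward, while $(\Phi_g)_*\big(\xi_{TQ\times\r}\big)=(Ad_g\,\xi)_{TQ\times\r}$ gives $\Omega((\Phi_g)_*W)=Ad_g\,\Omega(W)$, that is $(\Phi^{TQ\times\r}_g)^*\Omega=Ad(g)\,\Omega$. The main obstacle is the tensorial invariance of $g$ in the second step: because $g$ is defined through arbitrary extensions of the tangent vectors, one must lean on the naturality of the complete and vertical lifts and on the invariance of $d\eta_L$ to make the argument well defined; everything afterward is the standard principal-connection bookkeeping, with the $Ad$-dependence ultimately traceable to the relation $\widehat{E}_a=A^b_a\,\widetilde{E}_b$ of \eqref{A}.
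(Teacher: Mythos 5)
Your proof is correct, but it follows a genuinely different route from the paper's. The paper works infinitesimally: it computes the Lie derivative ${\mathcal L}_{\widetilde{E}_a^C}\bar\Omega$ of the connection $(1,1)$-tensor along the fundamental vector fields, reduces the vanishing of that Lie derivative to the invariance of the horizontal vector fields $H_i=X_i^C-B^a_i\widetilde{E}_a^C$, i.e.\ to the identity $\widetilde{E}_a^C(B^d_i)=B^b_iC^d_{ab}$, and then verifies this identity by explicitly differentiating the Hessian coefficients $g_{ab}$, $g_{ib}$, $g^{ec}$ using the invariance of $L$ and the expressions $g_{ij}=X_i^V(X_j^V(L))$, etc. You instead prove the global equivariance $(\Phi^{TQ\times\r}_g)^*\Omega=Ad(g)\,\Omega$ directly: invariance of $d\eta_L$ plus naturality of complete and vertical lifts gives tensorial invariance of the fibre metric $g$, hence $G$-invariance of the horizontal distribution, hence equivariance of $\Omega$ via the transformation law of fundamental vector fields. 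Your argument is coordinate-free, avoids the computation of $\widetilde{E}_a^C(B^d_i)$ entirely, and yields the finite (group-level) equivariance rather than only its infinitesimal version (the paper's Lie-derivative condition suffices only because $G$ is assumed connected); the paper's computation, on the other hand, produces the explicit derivative identities for the Hessian coefficients, which are of independent use in the reduction formulas. One small point worth making explicit in your write-up: the well-definedness of $g$ (independence of the extensions $X,Y$) is not merely a consequence of naturality but of the local expression \eqref{localG}, which shows $g$ is tensorial; once that is granted, the rest of your argument is the standard bookkeeping you describe.
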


\begin{proof}
The condition we need to check is \begin{equation}\label{important}
    {\mathcal L}_{\xi_{   TQ\times \r }} \bar\Omega  =
{\mathcal L}_{\xi_{   TQ }} \bar\Omega
=
{\mathcal L}_{\xi_{Q}^C} \bar\Omega   = 0 ,
\end{equation}
where $\bar \Omega  $ is the connection $(1,1)$-tensor field that is associated to $\Omega  $ and ${\mathcal L}$ is the Lie derivative (see Section 2).

This tensor field is given by
\begin{equation}\label{hv0}
 \bar\Omega({\bf H} \, W)=0,\qquad \bar\Omega({\bf V}W)={\bf V}W
\end{equation}

Let $W$ be a vector field on $ TQ\times \r$, then
 $$
 \begin{array}{ccl}
   {\mathcal L}_{\widetilde{E}_a ^C} \bar\Omega (W)
   & =&  [\widetilde{E}_a ^C, \bar\Omega( W)]-
   \bar\Omega([\widetilde{E}_a ^C,W])\\ \noalign{\medskip}
   &=&  -\bar\Omega([\widetilde{E}_a ^C,{\bf H}W])= -{\bf V}[\widetilde{E}_a ^C,{\bf H}W]
   \end{array}$$
 where have used \eqref{hv0} and that $[\widetilde{E}_a ^C,{\bf V}W]$ is vertical. Computing $[\widetilde{E}_a ^C,{\bf H}W]$ we obtain that its vertical part is 
 \begin{equation}
     {\bf V}[\widetilde{E}_a ^C,{\bf H}W]= W^i[\widetilde{E}_a ^C,H_i]
 \end{equation}
 and thus ${\mathcal L}_{\widetilde{E}_a ^C} \bar\Omega (W)=0$ if the vector fields $H_i$ is invariant.

 A direct computation shows that
 \begin{equation}
     {\bf V}[\widetilde{E}_a ^C,H_i]=  \left(  -   
     \widetilde{E}_a ^C(B^d_i)+B^b_iC^d_{ab}\right)
     \widetilde{E}_d ^C.
 \end{equation}
therefore $H_i$ is invariant if \begin{equation}\label{fin}
    \widetilde{E}_a ^C(B^d_i)=B^b_iC^d_{ab}.
\end{equation}

Now, using the identities \eqref{hess} and that $L$ is invariant one obtains
\begin{equation}
  \widetilde{E}_d ^C(g_{ab})=
  C^e_{da}g_{eb}+C^e_{db}g_{ae},
  \quad \widetilde{E}_d ^C(g_{ib})=
  C^e_{db}g_{ie} ,\quad
  \widetilde{E}_d ^C(g^{ec})=-
  \widetilde{E}_d ^C(g_{ab}) g^{ac}g^{eb},
\end{equation}
and with these identities one deduce finally \eqref{fin} .
 \end{proof}

\section{Reconstruction} \label{sec7}

Now that we have derived the reduced form of the  Herglotz-Euler-Lagrange equations it remains to
consider the problem of reconstruction: suppose we can find a solution of these equations, that is
an integral curve  $\breve{\Gamma}$, how do we reconstruct from it a solution of the original equations, that is, an integral curve of $\Gamma$.

There is in fact a standard method for reconstructing integral curves of an invariant vector
field from reduced data, which makes use of connection theory \cite{AM,book}.

   \bigskip

   \bigskip

   Let 
 $\breve{c}(t) $ be 
an integral curve of  $\breve{\Gamma}$  on $TQ/G\times \r$ and let $(v_{q_0},s_o)$ be a point of $TQ\times \r$ in the fibre over $\breve{c}(0)$: we aim to find
the integral curve of $\Gamma$ though $(v_{q_0},s_o)$.

Let $\breve{c}^H(t)$ be the horizontal lift through $(v_{q_0},s_o)$ of $\breve{c}(t)$ with respect to the connection $\Gamma$ on $TQ\times \r$, this is the unique curve in $TQ\times \r$ projecting onto $\breve{c}(t) $ such that
$\Omega(\dot{\breve{c}}^H(t))=0 $  and $\breve{c}^H(0) = (v_{q_0},s_o)$.
 
  Since $$\pi^{TQ\times \r}(c(t))=\pi^{TQ\times \r}(\breve{c}^H(t))=\breve{c}(t)$$
  there exists a curve $g(t)$ on $G$ such that 
\begin{equation}\label{recon1}  
    c(t)=\Phi^{TQ\times \r}(g(t),\breve{c}^H(t))\equiv 
 g(t) \, \cdot \, \breve{c}^H(t)
\end{equation}
 with $g(0)=e$.
 
Now, we define a curve $\xi(t)$ on $\mathfrak{g}$ as follows 
\begin{equation}\label{0recon1}   \xi(t)=(L_{g(t)^{-1}})_*(g(t))\dot{g}(t) \end{equation}
or, equivalently, 
  $$
\dot{g}(t)=(L_{g(t)})_*(e)\xi(t) \qquad \mbox{with} \quad g(0)=e.
$$ 

Now, using the following lemma 
\begin{lem}
For an action $\Phi:G \times M \to M$  we have
$$
(\Phi_m)_*(g)v_g=(\Phi_g)_*(m)\eta_M(m)
\quad  \eta= (L_{g^{-1}})_*(g)v_g\in \mathfrak{g}$$
\end{lem}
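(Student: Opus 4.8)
The plan is to reduce the claim to the single functional identity
\[
\Phi_g\circ\Phi_m=\Phi_m\circ L_g,
\]
where $L_g:G\to G$ denotes left translation, and then to differentiate it. First I would rewrite the right-hand side of the assertion using the definition of the fundamental vector field recalled in Section~\ref{sec4}, namely $\eta_M(m)=(\Phi_m)_*(e)(\eta)$. With this, the statement becomes an equality of two linear maps $T_eG\to T_{g\cdot m}M$ evaluated at $\eta$, and both sides will live in the tangent space $T_{g\cdot m}M$.

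Next I would prove the functional identity itself, which is a direct consequence of the left-action axiom: for every $h\in G$,
\[
(\Phi_g\circ\Phi_m)(h)=g\cdot(h\cdot m)=(gh)\cdot m=\Phi_m(gh)=(\Phi_m\circ L_g)(h).
\]
Differentiating and evaluating at $h=e$, where $\Phi_m(e)=m$ and $L_g(e)=g$, the chain rule yields
\[
(\Phi_g)_*(m)\circ(\Phi_m)_*(e)=(\Phi_m)_*(g)\circ(L_g)_*(e).
\]
Applying both sides to $\eta$ turns the left-hand side into exactly $(\Phi_g)_*(m)\,\eta_M(m)$.

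It then remains to identify the right-hand side. Here I would use the hypothesis $\eta=(L_{g^{-1}})_*(g)v_g$ together with $L_g\circ L_{g^{-1}}=\mathrm{id}_G$: differentiating the latter at $g$ (so at the base point $L_{g^{-1}}(g)=e$) gives $(L_g)_*(e)\circ(L_{g^{-1}})_*(g)=\mathrm{id}_{T_gG}$, whence $(L_g)_*(e)(\eta)=v_g$. Substituting this into the chain-rule identity applied to $\eta$ converts its right-hand side into $(\Phi_m)_*(g)v_g$, which is precisely the claimed equality.

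The argument is entirely formal, so I do not expect a genuine obstacle; the only thing requiring care is the bookkeeping of base points, so that each differential $(\Phi_g)_*(\cdot)$, $(\Phi_m)_*(\cdot)$, $(L_g)_*(\cdot)$ is taken at the correct point and the compositions are well defined, with both sides ending up in $T_{g\cdot m}M$. I would also flag the dependence on the convention: it is precisely the left-action axiom $g\cdot(h\cdot m)=(gh)\cdot m$ that produces $\Phi_g\circ\Phi_m=\Phi_m\circ L_g$ with \emph{left} translation $L_g$, whereas a right action would instead bring in right translation.
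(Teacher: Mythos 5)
Your argument is correct: the identity $\Phi_g\circ\Phi_m=\Phi_m\circ L_g$ (valid precisely because the action is a left action), differentiated at the identity and combined with $(L_g)_*(e)\circ(L_{g^{-1}})_*(g)=\mathrm{id}_{T_gG}$ and the paper's definition $\eta_M(m)=(\Phi_m)_*(e)(\eta)$, yields exactly the stated equality, with both sides landing in $T_{g\cdot m}M$ as you note. The paper itself states this lemma without proof (it is invoked only to justify the chain-rule computation in the reconstruction equation), so there is no proof to compare against; yours is the standard derivation and fills that gap correctly, including the useful remark that a right action would replace $L_g$ by right translation.
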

 and \eqref{recon1} and \eqref{0recon1}, we get
 \begin{equation}\label{recn2}
 \begin{array}{ccl}
   \Gamma(c(t))  =\dot{c}(t) &=&
   (\Phi^{TQ\times \r}_{g(t)})_*(\breve{c}^H(t))\,  \dot{\breve{c}}^H(t) \, + \, ((\Phi^{TQ\times \r})_{\breve{c}^H(t)})_*(g(t)) \dot{g}(t)
\\ \noalign{\bigskip}
&=&    (\Phi^{TQ\times \r}_{g(t)})_*(\breve{c}^H(t)) \left[ \dot{\breve{c}}^H(t)\, + \,    (\xi(t))_{TQ\times \r}(\breve{c}^H(t)) \right]  \in T_{c(t)}(TQ\times \r )
\end{array}
 \end{equation}

From equation \eqref{recn2} and since $\Gamma$ is invariant, we deduce
 $$ 
 \dot{\breve{c}}^H(t)+    (\xi(t))_{TQ\times \r}(\breve{c}^H(t))  
=  \Gamma(\breve{c}^H(t))  
 $$ 
then, the reconstruction equation is
\begin{equation}\label{reconseq}      
  \Omega(\breve{c}^H(t))( \Gamma(\breve{c}^H(t))) = \xi(t) \, .
\end{equation}

 From (\ref{gamlocl}) and Proposition \ref{omegavectors} we have
$$ 
     \Omega( X_i^C)=B^a_i\,  E_a  ,\quad
 \Omega( \hat{ E}^{C}_a)=
 A^b_a\,E_b,\quad 
 \Omega( X_i^V)=\Omega(\widetilde{ E}^{V}_a )=0, \quad 
    \Omega(\widetilde{ E}^{C}_a)=E_a, \quad\Omega(  \derpar{}{s})=0. 
$$ 

and therefore  
\begin{equation}\label{gamlocl0}
\Omega(\Gamma)= (A^a_bw^b +  B^a_i v^i)E_a\, .
\end{equation}

The reconstruction equation is therefore
\begin{equation}\label{gamlocl01}
\begin{array}{rl} 
\xi(t)=&\left[A^a_b(\breve{c}^H(t))w^b(\breve{c}^H(t)) +  B^a_i(\breve{c}^H(t)) v^i(\breve{c}^H(t))\right]E_a\, \\ \noalign{\medskip}
= & \left[A^a_b(\breve{c}^H(t))w^b( t) +  B^a_i(\breve{c}^H(t)) v^i(t)\right]E_a\,.
\end{array}
\end{equation}

since $v^i$ and $w^a$ are invariant, and then $v^i(t)$ and $w^a(t)$ are just their values on $\breve{c}(t)$.
\bigskip 

 Let us denote by   
$$  \breve c(t)=(q^i(t), v^i(t),w^a(t),s(t))
 \qquad \breve{c}^H(t)=(q^i(t),h^a(t),v^i(t),w^a(t),s(t))$$ 
the local coordinates of the corresponding of curves, where $h(t)=(h^a(t))$ is a curve on $G$ determined by $(v_{q_0},s_o)$ and the relation $\Omega(\dot{\breve{c}}^H(t))=0$.

The reconstruction equation \eqref{reconseq} is therefore
$$
\xi(t)=(A^a_bw^b(\breve{c}^H(t))+B^a_i v^i(\breve{c}^H(t)))E_a
=(A^a_bw^b(t)+ B^a_i v^i(t))E_a    
$$

  
We assume that we are able to calculate the integral curve
 
$$\breve{c}(t)=(q^i(t),v^i(t),w^a(t),s(t))$$ 

of the reduced vector fiel $\breve{\Gamma}$ through $\pi^{TQ\times\r}(v_{q_0},s_0)$.

  The horizontal lift of
$ \breve{c}$ is a curve in $TQ\times \r$  of the form 
 $$\breve{c}^H(t)=(q(t),h(t),v^i(t),w^a(t),s(t))$$
  where $h(t)$ is a curve in   $G$ to be determined by 
  $(v_0,s_0)$ and by the relation
  $\Omega(\dot{\breve{c}}^H) = 0$.

The reconstruction  equation can therefore be written as
$$\xi(t)
=\left(A^a_b(h(t))\, w^b(t) +  B^a_i (  \breve{c}^H(t)   ) \, v^i(t) \right)E_a$$
and the integral curve of 
  $\Gamma$ is just 
$$c(t)=(q(t ), g(t)h(t), v^i (t),w^a(t)).$$

\subsection{A full example}

Let $G$ be the Lie group of invertible affine transformations on the real line. An element of this group is an affine map of the form \(
T_{(\theta, \phi)}:\mathbb{R} \to \mathbb{R}, \  t \mapsto \exp(\theta)t + \phi
\)
and can be represented by the matrix
\[
\begin{pmatrix}
\exp(\theta) & \phi \\
0 & 1
\end{pmatrix}.
\]
The identity element is just \( t \mapsto t \) (the identity matrix) and multiplication on the left of \((\theta_2, \phi_2)\) by \((\theta_1, \phi_1)\) is given by the composition of the two affine maps, i.e. the element
\[
(\theta_1, \phi_1) \ast (\theta_2, \phi_2) = (\theta_1 + \theta_2, \exp(\theta_1)\phi_2 + \phi_1).
\]
The corresponding Lie algebra is the vector space spanned by the matrices
\[
E_{1} =\begin{pmatrix}
1 & 0 \\
0 & 0
\end{pmatrix} \text{ and } E_{2} =\begin{pmatrix}
0 & 1 \\
0 & 0
\end{pmatrix}.
\]
Take now the manifold \( Q \) to be \( G \times \mathbb{R} \) and the action of $G$ on $Q$ given by the left translation on the first factor of \( Q \). We will denote by \( x \) the coordinate on \( \mathbb{R} \). Then $\pi:Q \rightarrow \R$ is a (trivial) principal fibre bundle and there is a trivial principal connection associated to the identification of $TQ$ with the direct sum $TG\oplus T\R$.

A basis of fundamental vector fields is
\[
\tilde{E}_1 = \frac{\partial}{\partial \theta} + \phi \frac{\partial}{\partial \phi}, \quad \tilde{E}_2 = \frac{\partial}{\partial \phi},
\]
which span the vertical space $V=\ker T\pi=TG$. In addition, the horizontal space with respect to the trivial connection is spanned by the vector field \( X = \frac{\partial}{\partial x} \). Equivalently, the connection form is the map $\omega:TQ\rightarrow \mathfrak{g}$ given by $$\omega(\gamma^{0}X + \gamma^{1}\tilde{E}_1 + \gamma^{2}\tilde{E}_2)=\gamma^{1}E_{1} + \gamma^{2}E_{2}.$$

The adapted coordinates \((v^i, v^a)\) with respect to this basis are \( v^0 = \dot{x} \) and \( v^1 = \dot{\theta}, v^2 = \dot{\phi} - \phi \dot{\theta} \).
The only non-vanishing Lie algebra structure cosntant is $C_{12}^{2}=-1$ in virtue of the fact that \([\tilde{E}_1, \tilde{E}_2] = -\tilde{E}_2\). The complete and vertical lifts of this basis are
\[
\tilde{E}^C_1 = \frac{\partial}{\partial \theta} + \phi \frac{\partial}{\partial \phi} + \dot{\phi} \frac{\partial}{\partial \dot{\phi}}, \quad \tilde{E}^C_2 = \frac{\partial}{\partial \phi}, \quad \tilde{E}^V_1 = \frac{\partial}{\partial \dot{\theta}} + \phi \frac{\partial}{\partial \dot{\phi}}, \quad \tilde{E}^V_2 = \frac{\partial}{\partial \dot{\phi}}.
\]
Alternatively, an invariant basis of vector fields is given by \(\{ \hat{E}_1, \hat{E}_2, X\}\), where
\[
\hat{E}_1 = \frac{\partial}{\partial \theta}, \quad \hat{E}_2 = \exp(\theta) \frac{\partial}{\partial \phi},
\]
and its adapted coordinates are \( v^0 = \dot{x}, w^1 = \dot{\theta} \) and \( w^2 = \exp(-\theta) \dot{\phi} \). The complete and vertical lifts of the above basis are
\[
X^C = \frac{\partial}{\partial x}, \quad X^V = \frac{\partial}{\partial \dot{x}}, \quad \hat{E}^C_1 = \frac{\partial}{\partial \theta}, \quad \hat{E}^C_2 = \exp(\theta) \left( \frac{\partial}{\partial \phi} + \dot{\theta} \frac{\partial}{\partial \dot{\phi}} \right),
\]
\[
\hat{E}^V_1 = \frac{\partial}{\partial \dot{\theta}}, \quad \hat{E}^V_2 = \exp(\theta) \frac{\partial}{\partial \dot{\phi}}.
\]
In addition, the matrix \( A \) of change of basis \(\hat{E}_a(x, g) = A^b_a(g) \tilde{E}_b(x, g)\), is
\[
A(g) =
\begin{pmatrix}
1 & 0 \\
-\phi & \exp(\theta)
\end{pmatrix}.
\]

If we use the invariant fiber coordinates \((v^0, w^a)\), the induced action on \(TQ\) is
\[
\psi_{TM}(\theta_1, \phi_1)(x, (\theta, \phi), \dot{x}, w^1, w^2) = (x, (\theta_1, \phi_1) \ast (\theta, \phi), \dot{x}, w^1, e^{\theta_{1}}w^2).
\]
Since the coordinates \((x, \dot{x}, w^1, w^2)\) can be interpreted as coordinates on \(TQ/G = T\mathbb{R} \times TG/G = T\mathbb{R} \times \mathfrak{g}\), $G$-invariance of a Lagrangian function implies that the group variables $h=(\theta,\phi)$ do not explicitly appear in the Lagrangian, when it is written in terms of the invariant fiber coordinates.

Consider the contact Lagrangian $L:TQ \times \R \to \R$ given by
\[
L(x, h, \dot{x}, \dot{h}, s) = \frac{1}{2} \dot{\theta}^2 + q \dot{x} \dot{\theta} + \frac{1}{2} \dot{x}^2 + \ln(\exp(-\theta)\dot{\phi}) - \gamma s,
\]
where \( q \) is a constant. The Lagrangian is $G$-invariant since \(\tilde{E}^C_1(L) = 0 = \tilde{E}^C_2(L)\) and in the invariant fiber coordinates \((\dot{x}, w^1, w^2)\), the Lagrangian is
\[
L = \frac{1}{2}(w^1)^2 + q \dot{x} w^1 + \frac{1}{2} \dot{x}^2 + \ln(w^2) - \gamma s.
\]

The Hessian matrix in the basis \(\{ \tilde{E}_a, X \}\) is given by
\[
g =
\begin{pmatrix}
1 - \frac{\phi^2}{\dot{\phi}^2} & -\frac{\phi}{\dot{\phi}^2}  & q \\
-\frac{\phi}{\dot{\phi}^2} & -\frac{1}{\dot{\phi}^2} & 0 \\
q & 0 & 1
\end{pmatrix}.
\]

The determinant of \( g \) is \((q^2 - 1)/\dot{\phi}^2\), so the Lagrangian is regular as long as \( q^2 \neq 1 \). The upper left \( (2,2) \) matrix represents \((g_{ab})\). It is non-singular since its determinant is \(-1/\dot{\phi}^2\). Its inverse is
\[
(g^{ab}) =
\begin{pmatrix}
1 & -\phi \\
-\phi & \phi^2 - \dot{\phi}^2
\end{pmatrix}.
\]


The Euler-Lagrange-Herglotz equations for the Lagrangian $L$ are
\[
q \ddot{\theta} + \ddot{x} = -\gamma (\dot{x}+q\dot{\theta}), \quad \ddot{\theta} + q \ddot{x} + 1 = -\gamma (\dot{\theta}+q\dot{x}), \quad -\frac{\ddot{\phi}}{\dot{\phi}^2} = -\frac{\gamma}{\dot{\phi}}.
\]

{We will assume that \(\dot{\phi}_0 > 0\). Then $\ddot{\phi} = \gamma \dot{\phi}$.}

We now use the symmetry reduction and reconstruction to reobtain the same solutions. Since for the current example the connection coefficients
of the trivial connection vanish, the Lagrange-Poincare-Herglotz equations become
\[
\frac{d}{dt} \left( \frac{\partial l}{\partial w^b} \right) =
\frac{\partial l}{\partial w^a} C^a_{bd} w^d + \frac{\partial l}{\partial s}\frac{\partial l}{\partial w^{a}},
\]
\[
\frac{d}{dt} \left( \frac{\partial l}{\partial \dot{x}} \right) -
\frac{\partial l}{\partial x} = \frac{\partial l}{\partial s}\frac{\partial l}{\partial \dot{x}}.
\]
and substituting for the Lagrangian function we have that
\[
\dot{w}_1 + q\ddot{x} = -1 -\gamma (w_{1} + q\dot{x}), \quad \dot{w}_2 = -w_1w_2 + \gamma w_{2}, \quad q\dot{w}_1 + \ddot{x} = -\gamma (\dot{x} + q w_{1}).
\]

From the reduced solution \(\check{v}(t) = (x(t), \dot{x}(t), w_1(t), w_2(t), s(t))\),
we could determine the remaining coordinates \((\theta(t), \phi(t))\) directly from the relations \(\dot{\theta} = w_1\) and
\(\exp(-\theta)\dot{\phi}= w_2\). However, let us reproduce the reconstruction process in detail: first we calculate the horizontal lift of \(\check{v}(t)\), and then we use it in the reconstruction
equation. In this way, we will see the effect of the connection in the reconstruction
equation.

The coordinates of the horizontal lift using the generalized mechanical connection are \(\check{v}^H(t) =
(x(t), \phi_H(t), \theta_H(t), \dot{x}(t), w_1(t), w_2(t), s(t))\), with respect to the invariant basis. They can be determined
by the relation \(
0 = \Omega(\dot{\check{v}}^H(t)) \), which in coordinates reads
\[
\dot{\theta}_H = -q \dot{x}, \quad \dot{\phi}_H = -q \dot{x} \phi_H + q \dot{x} \phi_H = 0.
\]
Therefore, \(\theta_H(t) = -qx(t) + qx_0 + \theta_0\) and \(\phi_H(t) = \phi_0\).
Now we determine the curve \(g(t) = (\theta_1(t), \phi_1(t))\) in \(G\) such that \(v = g \check{v}^H\) is the solution of
the Euler-Lagrange-Herglotz equations with the given initial values. This curve is the solution through
the identity of the reconstruction equation \(g^{-1}\dot{g} = 
\Omega(\Gamma(\check{v}^H))\). The left-hand side is \(\dot{\theta}_1 E_1 +
\exp(-\theta_1)\dot{\phi}_1 E_2\). The right-hand side is \((w_1 + q \dot{x})\tilde{E}^C_1 \circ \check{v}^H +(w_2 + q \phi_H \dot{x})\tilde{E}^C_1 \circ \check{v}^H\). The
reconstruction equations are therefore
\[
\dot{\theta}_1 = w_1 + q \dot{x}, \quad \exp(-\theta_1)\dot{\phi}_1 = -\phi_H w_1 + \exp(\theta_H)w_2 - q \phi_H \dot{x}.
\]


Thus, we may write the solution \(v(t) =
(x(t), \phi(t), \theta(t), \dot{x}(t), w_1(t), w_2(t), s(t))\) with respect to the invariant basis, where $\phi(t)$ and $\theta(t)$ are of the form $\theta(t)=\theta_{1}(t)+\theta_{H}(t)$ and $\phi(t)= \phi_{1}(t) + \exp{(\theta_{1}(t))}\phi_{H}(t)$, from where we deduce that $\dot{\theta}=w_{1}$ and $\dot{\phi}=\exp(\theta)w_{2}$.

\section{Conclusions and further work}

In this paper we have developed a process for the reduction of Lagrangian contact systems in the presence of a group of symmetries. We have used the quasi-coordinate method and adapted local references to obtain the corresponding reduced equations, the so-called Lagrange-Poincar\'e-Herglotz equations. One of the necessary geometrical constructions is the prolongation of a connection in a principal fiber bundle using the Hessian of the Lagrangian.

In a future work, we propose the following objectives directly related to the research investigated in this paper:

\begin{itemize}

\item Develop a reduction procedure for discrete contact Lagrangian systems with symmetries.

\item Explore the analog results of the present paper to the case of contact Lagrangian systems subject to nonholonomic constraints and symmetries.

\item Extend the above results for higher-order contact Lagrangian systems with symmetries.

\end{itemize}

\section*{Acknowledgements}

 Modesto Salgado and Silvia Souto acknowledge financial support from Grant  Ministerio de Ciencia, Innovación y Universidades (Spain), project PID2021-125515NB-C21. A. Anahory Simoes, L. Colombo and M. de Le\'on acknowledge financial support from the Spanish Ministry of Science and Innovation, under grants PID2019-106715GB-C21, PID2022-137909NB-C2 and the Severo Ochoa Programme for Centres of Excellence in R\&D (CEX2019-000904-S).


\begin{thebibliography}{MM}



\itemsep 1pt plus 1pt

\bibitem{AM}
R.A. Abraham, J.E. Marsden. Foundations of Mechanics,
(Second Edition),  Benjamin-Cummings Publishing Company, New York, (1978).

\bibitem{Albert} C.\ Albert, Le théorème de réduction de Marsden-Weinstein en géométrie cosymplectique et de contact. J. Geom. Phys. 6, 627–649 (1989).

 \bibitem{EPH}  A. Anahory Simoes, L. Colombo, M. de Le\'on, J.C. Marrero, D. Mart\'in de Diego, and E. Padrón. Reduction by symmetries of contact mechanical systems on Lie groups. to appear in: SIAM Journal on Applied Algebra and Geometry. arXiv preprint arXiv:2306.07028. (2023).
 
\bibitem{A1} A. Anahory Simoes, L. Colombo, M. de Le\'on, M. Salgado, and S. Souto. Euler–Lagrange–Herglotz equations on Lie algebroids. Anal. Math. Phys.14(2024), no.1, Paper No. 3, 19 pp.

\bibitem{A2} A. Anahory Simoes, L. Colombo, M. de Le\'on, M. Salgado, and S. Souto, Contact formalism for dissipative mechanical systems on Lie algebroids. arXiv preprint arXiv:2308.00990. (2023)

\bibitem{bravetti} A.\ Bravetti, Contact Hamiltonian dynamics: the concept and its use. Entropy 19 (2017), no. 10, Paper No. 535, 12 pp.

\bibitem{Cendra} H.\ Cendra, J.E.\ Marsden and T.S.\ Ratiu,  Lagrangian reduction by stages. Mem.\ Amer.\ Math.\ Soc.\ 152 (2001), no.\ 722.

\bibitem{RRA} M.\ Crampin and T.\ Mestdag,  Reduction and reconstruction aspects of second-order dynamical systems with symmetry. Acta Appl. Math. 105 (2009) 241--266.

\bibitem{LR} M.\ de Le\'on and P.R.\ Rodrigues, Methods of differential geometry in analytical mechanics. North-Holland Mathematics Studies, 158. North-Holland Publishing Co., Amsterdam, 1989.

\bibitem{MdLMLV}  M.\ de Le\'on and M.\ Lainz,  Contact Hamiltonian systems. J. Math. Phys. 60 (2019), no. 10, 102902, 18 pp.

\bibitem{canarios}  M.\ de Le\'on and M.\ Lainz,  A review on contact Hamiltonian and Lagrangian systems. Rev.Acad.Canaria de Ciencias, XXXI:1–46, 2019.

\bibitem{herglotz} G. Herglotz, ``Beruhrungstransformationen,'' 1930.

\bibitem{ruben} R.\ Izquierdo-L\'opez and M.\ de Le\'on, A review on coisotropic reduction in symplectic, cosymplectic, contact and co-contact Hamiltonian systems. J. Phys. A: Math. Theor. 57 (2024) 163001 (50pp).


\bibitem{book} J.E.\ Marsden, Lectures on mechanics.
London Math. Soc. Lecture Note Ser., 174.
Cambridge University Press, Cambridge, 1992. xii+254 pp.

\bibitem{mw} J.E.\ Marsden and A.\ Weinstein, Reduction of symplectic manifolds with symmetry. Rep. Math. Phys. 5, 121–130 (1974).

\bibitem{MC}
T.\ Mestdag and M.\ Crampin, Invariant Lagrangians, mechanical connections and the Lagrange-Poincar\'{e} equations, J.\ Phys.\ A: Math.\ Theor. \ 41 (2008) 344015 (20pp).

\bibitem{montgomery}
R.\ Montgomery, Canonical formulations of a classical particle in a Yang-Mills field and Wong's equations.
Lett. Math. Phys.8(1984), no.1, 59–67.

\bibitem{Weinstein}
A.\ Weinstein, Lectures on symplectic manifolds. CBMS Regional Conf. Ser. in Math., 29. American Mathematical Society, Providence, RI, 1979, ii+48 pp.

\bibitem{Willett} C.\ Willett, Contact reduction. Trans. Am. Math. Soc. 354, 4245–4260 (2002).

 \end{thebibliography}
\end{document}